        \theoremstyle{plain}
        \newtheorem{proposition}{Proposition}[section]
        \theoremstyle{remark}
        \newtheorem{remark}{\bf Remark}[section]
        \theoremstyle{remark}
        \theoremstyle{remark}
\newcommand{\la}{\left\langle}
\newcommand{\ra}{\right\rangle}
\newcommand{\mo}{tr}
\DeclareMathOperator{\Tr}{Tr}
\newcommand{\Kn}{{\rm Kn}}
\newcommand{\R}{{\mathbb{R}}}
\newcommand{\cint}[1]{\langle #1 \rangle}
\newcommand{\ent}{\mathbb{H}}
\newcommand{\entred}{{\cal H}}
\newcommand{\F}{{\bf F}}
\begin{document}

\begin{center}
{\bf An ES-BGK model for vibrational polyatomic  gases}

\vspace{1cm}
Y. Dauvois$^{1}$, J. Mathiaud$^{1,3}$,  L. Mieussens$^2$ \\

\bigskip
$^1$CEA-CESTA\\
15 avenue des sabli\`eres - CS 60001\\
33116 Le Barp Cedex, France\\

\bigskip
$^2$Univ. Bordeaux, Bordeaux INP, CNRS, IMB, UMR 5251, F-33400 Talence, France.\\
{ \tt(Luc.Mieussens@math.u-bordeaux.fr)}

\bigskip
$^3$Univ. Bordeaux, CNRS, CELIA, UMR 5107, F-33400 Talence, France.\\
{ \tt(julien.mathiaud@u-bordeaux.fr))}

\end{center}

\abstract{We propose an extension of the Ellipsoidal-Statistical BGK
  model to account for discrete levels of vibrational energy in a
  rarefied polyatomic gas. This model satisfies an H-theorem and
  contains parameters that allow to fit almost arbitrary values for
  the Prandtl number and the relaxation times of rotational and
  vibrational energies. With the reduced distribution technique, this
  model can be reduced to a three distribution system that could be
  used to simulate polyatomic gases with rotational and vibrational
  energy for a computational cost close to that of a simple monoatomic gas.}

\tableofcontents

\section{Introduction}
During reentry a space vehicle encounters several atmospheric layers
at high velocity; it is critical to estimate heat fluxes to design its
heat shield. At high altitudes, air is in rarefied regime and usual
macroscopic fluid dynamics equations become non valid; instead the
Boltzmann equation is used to describe transport and collisions of
molecules at a microscopic scale. The Direct Simulation Monte Carlo
method (DSMC)~\cite{bird,BS_2017} is generally used but its
computational cost is known to become very large close to dense
regimes. In this case, it can be more efficient to use deterministic
solvers based on discretizations of BGK like models of the Boltzmann
equation: the Boltzmann collision operator is replaced by a simple
relaxation operator towards Maxwellian equilibrium which satisfies
conservation of macroscopic quantities and second principle of
thermodynamics. However, by construction, the simple BGK
model~\cite{bgk} (derived for monoatomic gases) induces a Prandtl
number equal to $1$ and cannot predict the correct transport
coefficients: models which include another parameter to uncouple the
thermal relaxation from the viscosity relaxation have been proposed,
like the ES-BGK model~\cite{holway} and the Shakhov
model~\cite{S_model}.  Both models have been extended to polyatomic
gases with degrees of freedom of rotation~\cite{R_model,esbgk_poly}. However, up to our
knowledge, only the ES-BGK model can be proved to satisfy the second
principle of thermodynamics (also called H-theorem in kinetic
theory). This was proved and extended to polyatomic gases with
rotational energy by Andries et al.~\cite{esbgk_poly}. We also mention
another model where the Boltzmann collision operator is replaced by a
Fokker-Planck operator in velocity variable that allows for efficient
stochastic simulations: this model has been recently extended by Jenny
et al.~\cite{Jnny2010,Grj2011,Grj2012,GJ_13} and also by Mathiaud and
Mieussens~\cite{Mathiaud2016,Mathiaud2017,Mathiaud2019}.

Here we want to extend the ES-BGK model of~\cite{esbgk_poly} to take
into account vibration energy of molecules. Indeed, at high
temperature, there are exchanges of energy between translational,
rotational, and vibration modes. Taking into account vibration energy
has a strong influence on the parietal heat flux and shock position
\cite{mathiaudhdr,Mathiaud2018}. In recent literature, one can find
models that take into account vibrations of molecules by assuming a
continuous distribution of the vibrational energy
\cite{RS_2014,WYLX_2017,ARS_2017,KKA_2019,Mathiaud2018}. However, up
to our knowledge, it is not possible so far to prove any H-theorem for
these models.

Moreover, while
transitional and rotational energies in air can be considered as
continuous for temperature larger than 1K and 10K, respectively,
vibrational energy can be considered as continuous only for much
larger temperatures (2000K for oxygen and 3300K for nitrogen). For
flows up to 3000K around reentry vehicles, discrete levels of
vibrational energy must be used~\cite{anderson}.
An older BGK model proposed by Morse
\cite{Morse_1964} accounts for vibrational effects through discrete
energy levels of vibration. We used this idea to derive a new BGK
model with discrete vibrational energy levels
in~\cite{Mathiaud2019}, for which we were able to prove a
H-theorem. In this paper, we use this model and the methodology
of~\cite{esbgk_poly} to propose an ES-BGK extension, for which we are
also able to prove a H-theorem. This model contains some free parameters
that can be adjusted to recover any relaxation times for rotation and
vibration modes (as given by Jeans and Landau-Teller equations, for
instance), as well as the correct value of the Prandtl number. Note that since the vibration energy
is a non linear function of temperature, this extension is not
trivial: while~\cite{esbgk_poly} is based on convex combinations of
temperatures, we have found more natural to work with convex combinations of
energies.

At a computational level, note that even if the computational cost of a deterministic solver based
on a model with so many variables (velocity, energy of rotation and
vibration) is necessarily very large, the great advantage of the BGK
approach is that this cost can be drastically reduced. Indeed, like
every BGK models, the computational complexity of our new model can be
reduced by the standard reduced distribution
technique~\cite{HH_1968}: this gives a model that has the same
computational cost as a model for monoatomic gas (the only kinetic
variable is the velocity), while it still accounts for rotation and
vibration energy exchanges. Moreover, a H-theorem also holds for this reduced model.

The outline of our paper is as follows.  The next two sections are
necessary to prepare the introduction of our model: in section
\ref{sec:thermo} we detail the different energies at macroscopic scale
as functions of temperature, and we give their mathematical
properties; the description, at the kinetic level, of a polyatomic gas
with energy of translation, rotation, and vibration is given in
section~\ref{sec:dist_funct}. We define our new ES-BGK model in
section \ref{sec: esbgk}, in which we also prove a H-theorem. In
section \ref{sec: energ}, we show how the parameters of our model can
be adjusted to fit the correct relaxation times of rotation and
vibration. In section \ref{sec: chap} we derive the hydrodynamics
limits of our model by the usual Chapman-Enskog expansion. The reduced
ES-BGK model is derived and analyzed in section~\ref{sec:
  r_esbgk}. Finally, some preliminary numerical results are shown in
section~\ref{sec: numeric} to illustrate the capability of our model
to capture correct relaxation times.

\section{\label{sec:thermo}Internal energies of vibrational polyatomic perfect gases}

\subsection{The different macroscopic internal energies at equilibrium}

In these paper we consider vibrational polyatomic perfect gases. Each
molecule has several degrees of freedom: translation, rotation and
vibration. At the macroscopic level, a gas in thermodynamical equilibrium
at temperature $T$ has different specific energies associated to each
mode. For translational and rotational modes, the translational and
rotational energies are
\begin{equation} \label{eq-etr_erot} 
  e_{tr}(T)=\frac32 RT, \quad \text{ and } \quad 
e_{rot}(T)=\frac{\delta}2 RT,
\end{equation}
where $\delta$ is the number of degrees of freedom of rotation.
For the vibrational mode, the vibrational energy is
\begin{equation}\label{eq-evib} 
  e_{vib}(T)=\frac{RT_0}{\exp\left(T_0/T\right)-1},
\end{equation}
where $T_0$ is some characteristic temperature of the vibrations
($T_0=2256 K$ for dioxygen for instance).

The total internal energy is denoted by $e(T)$ and is simply the sum of
the three previous energies: 
\begin{equation}  \label{eq-e}
e(T) = e_{tr}(T)+e_{rot}(T)+e_{vib}(T).
\end{equation}

Finally, we also define the joint translational-rotational energy
function
\begin{equation}  \label{eq-etrrot}
e_{tr,rot}(T) =  e_{tr}(T) + e_{rot}(T) = \frac{3 +\delta }{2} RT,
\end{equation}
 that will be useful for the derivation of our model.

\subsection{Mathematical properties of the energy functions}

For the construction of our model, it is useful to study the
specific internal energies defined in the previous section, as
functions of the temperature. The property needed here is the
invertibility, since it will be used to define an equivalent
temperature for each mode in non-equilibrium regimes.

We denote by $e_i^{-1}$ the function that maps any given energy $E$ to
the corresponding temperature. In other words, 
$ e_i^{-1}(E) = T$ such that $e_i(T) = E$, 
where $i$ stands for $tr$, $rot$, $vib$, and $tr,rot$. Since $e_{tr}$,
$e_{rot}$, and $e_{tr,rot}$
are linear functions of $T$ (see~\eqref{eq-etr_erot} and~\eqref{eq-etrrot}), they are clearly
invertible, and we have
\begin{equation}\label{eq-inv_etr_rot} 
  e_{tr}^{-1}(E) = \frac{2}{3R}E, \qquad  
e_{rot}^{-1}(E) =  \frac{2}{\delta R}E, \quad
\text{ and } \quad  e_{tr,rot}^{-1}(E) =  \frac{2}{(3+\delta) R}E.
\end{equation}
For $e_{vib}$, which is a non linear function of $T$, it can be proved
it is increasing, thus invertible, and we
have
\begin{equation}\label{eq-inv_evib} 
  e_{vib}^{-1}(E) = {T_0}/ {\log\left( 1+\frac{RT_0}{E} \right)}.
\end{equation}
The total internal energy $e$ is also an increasing function (see~\eqref{eq-e}), thus
invertible, but its inverse $e^{-1}(E)$ cannot be written
analytically, and instead it must be computed
numerically. In other words
\begin{equation}\label{eq-inv_e} 
   e^{-1}(E) = T \quad \text{ such that } \quad E = \frac{3+\delta}{2}RT  + \frac{RT_0}{\exp\left(T_0/T\right)-1}
\end{equation}
which has to be solved numerically.

\section{Kinetic description}
\label{sec:dist_funct} 

\subsection{Distribution function}

The state of any gas molecule will be described by its position $x$,
its velocity $v$, its rotational energy $\varepsilon$, and its
discrete vibrational energy. In the case of the usual simple
harmonic oscillator model, this energy is given by $iRT_0$, where $i$ is the $i$th vibrational energy
level and $T_0$ is the characteristic vibrational temperature of the
gas.

The distribution function of the gas is the mass density
$f(t,x,v,\varepsilon,i)$ of molecules that at time
$t$ are located in a elementary volume $dx$ centered in $x$, have the
velocity $v$ in a elementary volume $dv$, have the rotational energy
$\varepsilon$ centered in $d\varepsilon$ and the discrete vibrational
energy $iRT_0$.

The macroscopic densities of mass $\rho$, momentum $\rho u$, and
internal energy $\rho E$ are defined
by the first five moments of $f$:
\begin{equation}\label{eq-mtsf} 
\rho=\la f \ra_{v,\varepsilon,i},\qquad \rho u =\la v f \ra_{v,\varepsilon,i},\qquad \rho E(f)=\la \left(\frac{1}{2}|v-u|^2+\varepsilon+iRT_0\right)f \ra_{v,\varepsilon,i}.
\end{equation}
In this paper, to clarify the notations, the dependence of $E$ on $f$
is made explicit, and we denote by $\la \phi
\ra_{v,\varepsilon,i}(t,x)=\sum_{i=0}^{+\infty}\int_{\mathbb{R}^3}\int_{\mathbb{R}}\phi(t,x,v,\varepsilon,i)d\varepsilon
dv$ the integral of any function $\phi$.

The specific internal energy $E(f)$ can be decomposed
into
\begin{equation}  \label{eq-EEtrErotEvib}
E(f) = E_{tr}(f) +  E_{rot}(f) + E_{vib}(f),
\end{equation}
which is the sum of the energy $E_{tr}(f)$ associated with the translational motion of
particles, the energy $E_{rot}(f)$ associated with the rotational mode,
and the energy $E_{vib}(f)$ associated with the vibrational mode, defined by
\begin{equation}\label{eq-Etrrotvib} 
\rho E_{tr}(f)=\la \frac{1}{2}|v-u|^2 f \ra_{v,\varepsilon,i},\qquad 
\rho E_{rot}(f)=\la \varepsilon f \ra_{v,\varepsilon,i},\qquad
\rho E_{vib}(f)=\la iRT_0 f \ra_{v,\varepsilon,i}.
\end{equation}

We also define the shear stress tensor $\Theta$ and the heat flux $q$ by
\begin{equation}\label{eq-Theta_q} 
\rho \Theta=\la (v-u)\otimes (v-u) f \ra_{v,\varepsilon,i}
\qquad 
q=\la \left(\frac{1}{2}|v-u|^2 + \varepsilon + iRT_0\right) (v-u) f \ra_{v,\varepsilon,i}.
\end{equation}

\subsection{Internal temperatures}
\label{subsec:int_temp}

When the gas is in a non-equilibrium state, as described by the
distribution $f$, a temperature can be defined for each mode, by using
the specific energy functions and their inverse as defined in
section~\ref{sec:thermo}. Indeed, the translational, rotational, and
vibrational temperatures are defined by
\begin{equation}\label{eq-T_tr_rot_vib} 
  T_{tr} = e_{tr}^{-1}(E_{tr}(f)), \quad 
  T_{rot} = e_{rot}^{-1}(E_{rot}(f)), \quad 
  T_{vib} = e_{vib}^{-1}(E_{vib}(f)), 
\end{equation}
so that we have the following relations
\begin{equation}\label{eq-E_T} 
E_{tr}(f)=\frac{3}{2}RT_{tr},\quad 
E_{rot}(f)=\frac{\delta}{2}RT_{rot},\quad 
E_{vib}(f)=\frac{RT_0}{\exp(T_0/T_{vib})-1}.
\end{equation}

The equilibrium temperature $T_{eq}$  is the temperature corresponding
to the total internal energy, that is to say 
\begin{equation}\label{eq-Teq} 
  T_{eq}  = e^{-1}(E(f)).
\end{equation}
In other words, $T_{eq}$ can be obtained by numerically solving
\begin{equation}\label{eq-ETeq} 
  E(f) = \frac{3+\delta}{2}RT_{eq}  + \frac{RT_0}{\exp\left(T_0/T_{eq}\right)-1}.
\end{equation}

Note that~\eqref{eq-Theta_q} and~\eqref{eq-E_T} give the following relation between $T_{tr}$ and the
stress tensor:
${\rm Tr}(\Theta) = 3RT_{tr}$. Moreover, each diagonal component of $\Theta$
can be associated to a directional translational temperature: indeed,
the translational temperature $T_{j,j}$ in direction $j$ can be defined
by $\Theta_{jj} = RT_{j,j}$, where $j=1,2,3$. Consequently, the
previous relation gives
$T_{tr} =  (T_{1,1}+T_{2,2}+T_{3,3})/3$.

Finally, it is useful for the following to define the intermediate
translational-rotational temperature by
\begin{equation}  \label{eq-T_trrot}
T_{tr,rot} = e_{tr,rot}^{-1}(E_{tr}(f)+E_{rot}(f)).
\end{equation}

   \subsection{Vibrational number of degrees of freedom}

By analogy with the relation between $E_{rot}(f)$ and $T_{rot}$ (see~\eqref{eq-E_T}), a number
of degrees of freedom $\delta_v(T_{vib})$ for the vibration mode can
be defined such that $E_{vib}(f) =
\frac{\delta_v(T_{vib})}{2}RT_{vib}$, so that we have
\begin{equation}\label{eq-deltav} 
 \delta_v(T_{vib})=\frac{2T_0/T_{vib}}{\exp(T_0/T_{vib})-1}.
\end{equation}
This number is not an integer, is temperature dependent, and tends to $2$ for large
$T_{vib}$.

\subsection{Relaxation times}
\label{sec:relax}

The exchanges of energy between the different modes and their
relaxation to equilibrium are characterized by the relaxation times
$\tau$, $\tau_{rot}$, $\tau_{vib}$.  The first one is the translation
relaxation time, that can be written as $\tau = 1/\nu$, where $\nu$ is
the collision frequency of molecules. The two others are the
rotational and vibrational relaxation times. They can be written as
functions of $\tau$ by $\tau_{rot}=\tau Z_{rot}$ and
$\tau_{vib}=\tau Z_{vib}$, where $Z_{rot}$ and $Z_{vib}$ can be viewed
as average numbers of collisions needed to enforce a change in
rotational and vibrational energy.

In most cases $1< Z_{rot}< Z_{vib}$ and relaxation processes occur in
a specific sequence (see~\cite{bird} for empirical laws that are
temperature dependent): first, the translational
temperatures $T_{j,j}$ in the three directions $j=1,2,3$ relax
towards the mean translational temperature $T_{tr}$, then the
translational and rotational temperatures $T_{tr}$ and $T_{rot}$ relax
towards the intermediate temperature $T_{tr,rot}$, and for longer
times this temperature and the
vibrational temperature $T_{vib}$ relax towards the equilibrium
temperature $T_{eq}$ (see figure~\ref{fig: temp} in section~\ref{sec: numeric} for an illustration).

\section{\label{sec: esbgk}An ES-BGK model with vibrations}

In this section, our new ES-BGK model that accounts for
vibrations of molecules is presented, and its main properties are
stated and discussed.

\subsection{Construction of the model}
\label{subsec:construct}

The evolution of the mass density of a gas in non-equilibrium is
described by the Boltzmann equation
\begin{equation}
\label{eq: cinetique}
\partial_t f+v \cdot\nabla f=Q(f),
\end{equation}
where $Q(f)$ is the Boltzmann collision operator. 

A simpler relaxation BGK like model can be derived, as proposed
in~\cite{Mathiaud2019}, where $Q(f)$ is replaced by
$\frac{1}{\tau}(\mathcal{M}[f] - f) $, where $\tau$ is a relaxation
time and $\mathcal{M}[f]$ is the generalized Maxwellian in velocity
and energy, as defined by
\begin{equation}\label{eq-Mf} 
\mathcal{M}[f](v,\varepsilon,i)= \mathcal{M}_{\mo}[f](v)\mathcal{M}_{rot}[f](\varepsilon)\mathcal{M}_{vib}[f](i),
\end{equation}
where
\begin{align*}
& \mathcal{M}_{\mo}[f](v)=\frac{\rho}{(2\pi RT_{eq})^{3/2}}\exp\left(-\frac{|v-u|^2}{2RT_{eq}}  \right),\\
& \mathcal{M}_{rot}[f](\varepsilon)=\frac{\Lambda(\delta)\varepsilon^{\frac{\delta-2}{2}}}{(R T_{eq})^{\delta/2}}\exp\left( -\frac{\varepsilon}{R T_{eq}} \right),
\\
& {M}_{vib}[f](i)=\left(1-\exp(-T_0/T_{eq})\right)\exp\left(-i\frac{T_0}{T_{eq}} \right),
\end{align*}
where exponential
laws associated to vibrations and rotations are normalized by functions 
$(1-\exp(-T_0/T_{vib}^{rel}))$ and $\Lambda(\delta)=1/{\bf
  \Gamma}(\frac{\delta}{2})$, where ${\bf \Gamma}$ is the usual gamma function.

However, this model is too simple, since the single relaxation time
cannot account for the various time scales of the original
problem. Indeed, such a model gives the same value for rotational and
vibrational relaxation times, and the same value for relaxation times
of viscous and thermal fluxes, which gives the usual incorrect value $\Pr =
1$ of the Prandtl number.

This problem can be fixed by using additional parameters in the model
(at least 3 in this case). The correct Prandtl number for a monoatomic
gas can be obtained by the ES-BGK approach~\cite{holway}, which has
been extended later in~\cite{esbgk_poly} to account for a correct
rotational time scale for polyatomic gases. Here, we extend this model
to account for a correct vibrational time scale. Note that in this
case, since the relation between temperature and energy is non linear,
we find it more relevant to make an intensive use of the energy
variable, that makes the derivation a bit different from that  of~\cite{esbgk_poly}.

Our ES-BGK collision operator is the following:
\begin{equation}
\label{eq: operator}
Q(f)=\frac{1}{\tau}(\mathcal{G}[f]-f),
\end{equation}
where the Gaussian distribution $\mathcal{G}[f]$ is defined by 
\begin{equation}\label{eq: Gauss}
\mathcal{G}[f](v,\varepsilon,i)= \mathcal{G}_{\mo}[f](v)\mathcal{G}_{rot}[f](\varepsilon)\mathcal{G}_{vib}[f](i),
\end{equation}
with
\begin{equation}\label{eq-Gaussalpha} 
\begin{aligned}
&\mathcal{G}_{\mo}[f](v)=\frac{\rho}{  \sqrt{\hbox{det}(2\pi \Pi)}}  \exp\left(-\frac{1}{2} (v-u)^{T}\,\Pi^{-1}\,(v-u))\right),\\
& \mathcal{G}_{rot}[f](\varepsilon)=\frac{\Lambda(\delta)}{(R {T_{rot}^{rel}})^{\delta/2}}\varepsilon^{\frac{\delta-2}{2}}\exp\left( -\frac{\varepsilon}{R T_{rot}^{rel}}\right) , \\
& \mathcal{G}_{vib}[f](i)=(1-\exp(-T_0/T_{vib}^{rel})) \exp\left(-i\frac{T_0}{T_{vib}^{rel}} \right).
\end{aligned}
\end{equation}
Note that $\mathcal{G}_{\mo}[f]$, $\mathcal{G}_{rot}[f]$ and
$\mathcal{G}_{vib}[f]$ are distributions associated to the energies of
translation, rotation and vibration of the molecules. 

The covariance matrix $\Pi$ and the temperatures
$T_{rot}^{rel}$ and $T_{vib}^{rel}$ are modifications of the stress
tensor $\Theta$ and rotational and vibrational temperatures so as to
fit different relaxation times to some given values, as it is
explained below.

First, the corrected stress tensor $\Pi$ is defined by (with $I$ the identity matrix):
\begin{equation}
\label{eq: Pi}
\Pi=
\Gamma R T_{eq} I 
+ (1-\Gamma) \left[ \theta R T_{tr,rot} I 
                    + (1-\theta)(  \nu \Theta + (1-\nu) R T_{tr} I ) \right] ,
\end{equation}
so that the hierarchy of relaxation processes explained in
section~\ref{sec:relax} holds: (1) the directional temperatures
$T_{j,j}$ (the diagonal elements of $\Theta$) first relax to $T_{tr}$
(this is governed by parameter $\nu$); (2) the translational temperature $T_{tr}$ relaxes to the
intermediate temperature $T_{tr,rot}$ (this is governed by
parameter $\theta$); (3)  this
temperature relaxes to the final equilibrium temperature $T_{eq}$, as
governed by parameter $\Gamma$.

Now the relaxation temperatures
$T_{rot}^{rel}$ and $T_{vib}^{rel}$, used in distributions $\mathcal{G}_{rot}$ and
$\mathcal{G}_{vib}$, are defined with the same idea as
the covariance matrix $\Pi$, except that we first write the
relaxations in term of energies. Indeed, we define the relaxation
energies for rotation and vibration by
\begin{equation}
\label{eq: fermeture}
\begin{split}
& e_{rot}^{rel}=\Gamma e_{rot}(T_{eq}) 
+  (1-\Gamma) \left[ \theta e_{rot}(T_{tr,rot}) + (1-\theta) E_{rot}(f)
 \right]
,\\
& e_{vib}^{rel}= \Gamma e_{vib}(T_{eq})+ (1-\Gamma) E_{vib}(f),
\end{split}
\end{equation}
and the corresponding relaxation temperatures are
\begin{equation}  \label{eq-fermeture2}
T_{rot}^{rel}=e_{rot}^{-1}(e_{rot}^{rel}), \quad \text{ and } 
\quad T_{vib}^{rel}=e_{vib}^{-1}(e_{vib}^{rel}).
\end{equation}
These definitions account for the relaxation of $T_{rot}$ to
$T_{tr,rot}$ then to $T_{eq}$, and for the relaxation of $T_{vib}$ to
$T_{eq}$ with rates that are consistent with the definition of $\Pi$.

Note that the relaxation rotational temperature $T_{rot}^{rel}$
can be equivalently defined by 
$T_{rot}^{rel} =   \Gamma T_{eq} + (1-\Gamma) \left[ \theta T_{tr,rot} +
  (1-\theta) T_{rot} \right] $, which is a simple extension of the
definition given in~\cite{esbgk_poly}. However, the relaxation
vibrational temperature $T_{vib}^{rel}$  cannot be defined in the same
way: indeed, the nonlinearity of the function $e_{vib}$ would make the
simpler definition $T_{vib}^{rel}= \Gamma T_{eq} + (1-\Gamma) T_{vib}$
not consistent with the energy conservation (see section~\ref{subsec:conservation}).

This derivation shows that parameter $\theta$ is
associated with transfers between translational and rotational
energies and $\Gamma$ with transfers between
translational-rotational and vibrational energies. It will be shown in
section~\ref{subsec:relax} that these parameters are related to
$Z_{rot}$ and $Z_{vib}$ by the relations
\begin{equation}\label{eq-defGammatheta} 
  \Gamma=\frac{1}{Z_{vib}},\quad \text{ and } \quad
\theta=\frac{1/Z_{rot}-1/Z_{vib}}{1-1/Z_{vib}}.
\end{equation}
Moreover, parameter $\nu$ will be used to fit the correct Prandtl
number. It will be shown in section~\ref{sec:relaxf} that $\nu$ has to be set so that the Prandtl number $Pr$ is
\begin{equation}\label{eq-Pr} 
  Pr=\frac{1}{1-(1-\Gamma)(1-\theta)\nu}.
\end{equation}
Finally, the relaxation time $\tau$ of the model is
\begin{equation*}
  \tau=\frac{\mu}{p}(1-(1-\Gamma)(1-\theta)\nu),
\end{equation*}
as it is proved in section~\ref{sec: NS}.

\subsection{Conservation properties}
\label{subsec:conservation}

For the analysis of the conservation properties of our model, it is
useful to define the relaxation energy of translation
\begin{equation*}
  e_{tr}^{rel} = \frac12 \Tr(\Pi) 
= \Gamma \frac32 R T_{eq}  + (1-\Gamma) ( \theta \frac32 R
T_{tr,rot} + (1-\theta)\frac32 R T_{tr}),
\end{equation*}
and the corresponding relaxation temperature of translation which is
\begin{equation}  \label{eq-Ttrrel}
T_{tr}^{rel}=e_{tr}^{-1}(e_{tr}^{rel}),
\end{equation}
that will be used later.

Then, note that this relation and the definition~(\ref{eq: fermeture})
of relaxation energies of rotation and vibration can be rewritten
under the compact form
\begin{equation} \label{conserv}
\begin{pmatrix} e_{tr}^{rel}\\ e_{rot}^{rel}\\ e_{vib}^{rel}\end{pmatrix}
=\Gamma
\begin{pmatrix} e_{tr}(T_{eq})\\ e_{rot}(T_{eq})\\ e_{vib}(T_{eq})\end{pmatrix}+(1-\Gamma)\begin{pmatrix} 1-\frac{\delta\theta}{3+\delta} & \frac{3\theta}{3+\delta} & 0\\ \frac{\delta\theta}{3+\delta} & 1-\frac{3\theta}{3+\delta} & 0 \\ 0& 0& 1\end{pmatrix}
\begin{pmatrix} E_{tr}(f)\\ E_{rot}(f)\\ E_{vib}(f)\end{pmatrix}.
\end{equation}
Now, we state what are the first moments of ${\cal G}[f]$ that can be
computed by standard integrals and series (see appendix~\ref{app:integrals}).
\begin{proposition} \label{prop:cons}
  The Gaussian  ${\cal G}[f]$ satisfies
\begin{align}
&  \la {\cal G}[f] \ra_{v,\varepsilon,i} = \rho,  \label{eq-Grho}\\
&  \la v{\cal G}[f] \ra_{v,\varepsilon,i} = \rho u,  \label{eq-Grhou}\\
& \la \frac12 |v-u|^2{\cal G}[f] \ra_{v,\varepsilon,i} = \rho  e_{tr}^{rel}, \quad 
\la \varepsilon {\cal G}[f] \ra_{v,\varepsilon,i} = \rho e_{rot}^{rel}, \quad
\la iRT_0 {\cal G}[f] \ra_{v,\varepsilon,i} = \rho e_{vib}^{rel}. \label{eq-Gint}
\end{align}
\end{proposition}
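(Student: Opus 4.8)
The plan is to verify each of the five moment identities in Proposition~\ref{prop:cons} by a direct computation, exploiting the product structure~\eqref{eq: Gauss} of $\mathcal{G}[f]=\mathcal{G}_{\mo}[f](v)\,\mathcal{G}_{rot}[f](\varepsilon)\,\mathcal{G}_{vib}[f](i)$, so that each moment factorizes into a product of three elementary integrals/series. First I would record the normalization of each factor: the Gaussian $\mathcal{G}_{\mo}[f]$ integrates in $v$ to $\rho$ (standard multivariate Gaussian with covariance $\Pi$), the rotational factor $\mathcal{G}_{rot}[f]$ integrates in $\varepsilon$ over $(0,\infty)$ to $1$ (a Gamma-type integral, using $\Lambda(\delta)=1/\Gamma(\delta/2)$), and the vibrational factor $\mathcal{G}_{vib}[f]$ sums over $i\ge 0$ to $1$ (a geometric series, whose normalizing constant $1-\exp(-T_0/T_{vib}^{rel})$ is chosen precisely for this). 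Combining these gives~\eqref{eq-Grho} immediately, and~\eqref{eq-Grhou} follows because $\mathcal{G}_{\mo}[f]$ is centered at $u$ so $\la v \mathcal{G}_{\mo}[f]\ra_v = \rho u$, while the rotational and vibrational factors contribute their normalizations $1$.

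For the three energy identities in~\eqref{eq-Gint} I would again use factorization. For $\la \tfrac12|v-u|^2 \mathcal{G}[f]\ra_{v,\varepsilon,i}$, the $\varepsilon$- and $i$-factors give $1$, and the $v$-integral is $\tfrac12\,\mathrm{tr}(\Pi)\,\rho$ by the standard second-moment formula for a Gaussian with covariance $\Pi$; by the definition of $e_{tr}^{rel}=\tfrac12\mathrm{Tr}(\Pi)$ given just before the proposition, this equals $\rho\, e_{tr}^{rel}$. For $\la \varepsilon \mathcal{G}[f]\ra_{v,\varepsilon,i}$, the $v$- and $i$-factors give $1$, and $\int_0^\infty \varepsilon\, \mathcal{G}_{rot}[f](\varepsilon)\,d\varepsilon = \tfrac{\delta}{2} R T_{rot}^{rel}$ by the Gamma-integral identity (the mean of a Gamma law with shape $\delta/2$ and scale $RT_{rot}^{rel}$); since $e_{rot}(T_{rot}^{rel}) = \tfrac{\delta}{2}RT_{rot}^{rel}$ and $T_{rot}^{rel}=e_{rot}^{-1}(e_{rot}^{rel})$ by~\eqref{eq-fermeture2}, this is exactly $\rho\,e_{rot}^{rel}$. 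Similarly, $\sum_{i=0}^\infty iRT_0\, \mathcal{G}_{vib}[f](i) = RT_0\,(1-e^{-T_0/T_{vib}^{rel}})\sum_{i\ge 0} i\, e^{-iT_0/T_{vib}^{rel}} = \frac{RT_0}{\exp(T_0/T_{vib}^{rel})-1}$ by summing the arithmetico-geometric series; this equals $e_{vib}(T_{vib}^{rel}) = e_{vib}^{rel}$ by~\eqref{eq-fermeture2}, so $\la iRT_0 \mathcal{G}[f]\ra_{v,\varepsilon,i} = \rho\,e_{vib}^{rel}$.

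The only genuinely non-routine point is the off-center second moment: for $\la \tfrac12|v-u|^2 \mathcal{G}_{\mo}[f]\ra_v$ I must confirm $\int_{\R^3} (v-u)\otimes(v-u)\,\mathcal{G}_{\mo}[f](v)\,dv = \rho\,\Pi$, i.e. that the covariance of the Gaussian really is $\Pi$ and not something rescaled; this is where one must be careful with the $\sqrt{\det(2\pi\Pi)}$ normalization and the change of variables $w = \Pi^{-1/2}(v-u)$, and it also implicitly requires $\Pi$ to be symmetric positive definite for the Gaussian to be well defined — but this positivity follows from~\eqref{eq: Pi} since $\Pi$ is a convex combination of positive multiples of $I$ and of $\Theta$ (which is positive semidefinite as a covariance matrix), with the coefficient $(1-\Gamma)(1-\theta)\nu$ of $\Theta$ small enough and the $RT_{eq}I$ and $RT_{tr}I$ terms strictly positive. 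I would relegate the explicit Gaussian and Gamma integral formulas and the geometric series summation to the appendix referenced in the statement (appendix~\ref{app:integrals}), citing them as needed; the proof in the main text then amounts to assembling these factors and invoking the definitions~\eqref{eq-fermeture2} and $e_{tr}^{rel}=\tfrac12\Tr(\Pi)$.
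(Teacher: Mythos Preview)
Your proposal is correct and matches the paper's approach exactly: the paper does not spell out a proof but simply states that the moments ``can be computed by standard integrals and series'' and points to appendix~\ref{app:integrals}, which is precisely the factorized Gaussian/Gamma/geometric computation you outline. One small slip: in the rotational and vibrational cases you wrote that ``the $v$- and $i$-factors give $1$'', but the $v$-integral of $\mathcal{G}_{\mo}[f]$ gives $\rho$, which is where the factor $\rho$ in your (correct) final answers $\rho\,e_{rot}^{rel}$ and $\rho\,e_{vib}^{rel}$ actually comes from; also note that the positive-definiteness of $\Pi$ (needed for the Gaussian to be well defined, and subtle since $\nu$ may be negative) is established separately in the paper as property~1 of Proposition~\ref{prop:theoH} rather than here.
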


Then these properties can be used to prove the conservations properties
of our kinetic model.
\begin{proposition}
The collision operator of the ES-BGK model satisfies the conservation
of mass, momentum, and energy:
\begin{equation*}
  \la (1,v,\frac12 |v-u|^2+ \varepsilon + iRT_0)
\frac{1}{\tau}({\cal G}[f]-f) \ra_{v,\varepsilon,i}  = 0.
\end{equation*}
\end{proposition}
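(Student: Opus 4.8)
The plan is to deduce everything from Proposition~\ref{prop:cons} and the compact identity~\eqref{conserv}. By linearity of $\la\cdot\ra_{v,\varepsilon,i}$ and since $\tau>0$, the statement is equivalent to saying that $\mathcal{G}[f]$ and $f$ share the same moments against the collision invariants $1$, $v$ and $\phi(v,\varepsilon,i):=\frac12|v-u|^2+\varepsilon+iRT_0$. So I would check the five scalar equalities one group at a time.

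For mass and momentum this is immediate: \eqref{eq-Grho} and~\eqref{eq-Grhou} give $\la\mathcal{G}[f]\ra_{v,\varepsilon,i}=\rho$ and $\la v\,\mathcal{G}[f]\ra_{v,\varepsilon,i}=\rho u$, while by the definition~\eqref{eq-mtsf} one also has $\la f\ra_{v,\varepsilon,i}=\rho$ and $\la v f\ra_{v,\varepsilon,i}=\rho u$; hence the mass and momentum components of $\la (1,v)\,\frac1\tau(\mathcal{G}[f]-f)\ra_{v,\varepsilon,i}$ vanish.

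For the energy, I would split $\phi$ into its translational, rotational and vibrational parts and invoke the three identities in~\eqref{eq-Gint}, which yield $\la\phi\,\mathcal{G}[f]\ra_{v,\varepsilon,i}=\rho(e_{tr}^{rel}+e_{rot}^{rel}+e_{vib}^{rel})$. Since $\la\phi f\ra_{v,\varepsilon,i}=\rho E(f)$ by~\eqref{eq-mtsf}, it remains to prove the scalar identity $e_{tr}^{rel}+e_{rot}^{rel}+e_{vib}^{rel}=E(f)$. Here I would sum the three rows of~\eqref{conserv}: the $\Gamma$-term gives $\Gamma(e_{tr}(T_{eq})+e_{rot}(T_{eq})+e_{vib}(T_{eq}))=\Gamma\,e(T_{eq})=\Gamma E(f)$ thanks to~\eqref{eq-e} and the definition~\eqref{eq-Teq}--\eqref{eq-ETeq} of $T_{eq}$; and the $(1-\Gamma)$-term gives $(1-\Gamma)$ times the vector $(E_{tr}(f),E_{rot}(f),E_{vib}(f))$ contracted against the column sums of the $3\times3$ exchange matrix. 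A one-line computation shows that every column of that matrix sums to $1$ (first column $1-\tfrac{\delta\theta}{3+\delta}+\tfrac{\delta\theta}{3+\delta}$, second column $\tfrac{3\theta}{3+\delta}+1-\tfrac{3\theta}{3+\delta}$, third column $1$), so this term equals $(1-\Gamma)(E_{tr}(f)+E_{rot}(f)+E_{vib}(f))=(1-\Gamma)E(f)$ by~\eqref{eq-EEtrErotEvib}. Adding the two contributions gives $E(f)$, hence the energy component vanishes as well.

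The only subtle point — and precisely the motivation for writing~\eqref{conserv} in matrix form — is the column-stochasticity of the exchange matrix: because $T_{rot}^{rel}$ and $T_{vib}^{rel}$ are obtained from convex combinations of \emph{energies} rather than of temperatures, total energy conservation is not built in a priori and has to be read off from the exact cancellation of the off-diagonal entries $\pm\tfrac{\delta\theta}{3+\delta}$ and $\pm\tfrac{3\theta}{3+\delta}$ (which makes the translational--rotational transfer purely internal), together with the trivial fact that the vibrational row and column decouple. Everything else is a routine substitution of the moment formulas, so I do not expect any real difficulty beyond that bookkeeping.
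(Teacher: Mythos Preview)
Your proof is correct and follows essentially the same route as the paper: mass and momentum are read directly from~\eqref{eq-Grho}--\eqref{eq-Grhou} and~\eqref{eq-mtsf}, and for energy you sum the three rows of~\eqref{conserv} to get $e_{tr}^{rel}+e_{rot}^{rel}+e_{vib}^{rel}=\Gamma\,e(T_{eq})+(1-\Gamma)E(f)=E(f)$, exactly as in the paper. Your added remark that this reduces to the column-stochasticity of the exchange matrix is a nice way to phrase the cancellation the paper performs implicitly.
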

\begin{proof}
  The conservation of mass and momentum are obvious consequences of
  relations~\eqref{eq-Grho} and~\eqref{eq-Grhou}. For the conservation
  of energy, note that~\eqref{eq-Gint} and~\eqref{conserv} imply
\begin{equation*}
\begin{split}
   \la (\frac12 |v-u|^2+ \varepsilon + iRT_0) {\cal G}[f]
  \ra_{v,\varepsilon,i}  
& = \rho ( e_{tr}^{rel}+ e_{rot}^{rel}+ e_{vib}^{rel}) \\
& = \rho  \Gamma( e_{tr}(T_{eq})+ e_{rot}(T_{eq})+ e_{vib}(T_{eq})) 
+ \rho(1-\Gamma) (E_{tr}(f) + E_{rot}(f) + E_{int}(f)) \\
&  = \rho \Gamma E(f) + \rho (1-\Gamma)E(f) \\
&  = \rho E(f) =  \la (\frac12 |v-u|^2+ \varepsilon + iRT_0) f
  \ra_{v,\varepsilon,i}, 
\end{split}
\end{equation*}
where we have used relations~\eqref{eq-E_T}--\eqref{eq-ETeq} and \eqref{eq-mtsf}.
\end{proof}

\subsection{\label{sec: entropy}Entropy}

Andries et al. \cite{esbgk_poly} proved that the ES-BGK model for
polyatomic gases satisfies the entropy dissipation property. Since our
ES-BGK model is an extension to include the energy of vibration of
polyatomic gases we follow the same proof. First, the rotational
energy variable $\varepsilon$ is transformed to the variable $I$ so
that $\varepsilon=I^{2/\delta}$. With this new variable, the
distribution function of the gas now is $g(t,x,v,I,i)$, defined such
that $g(t,x,v,I,i)dI=f(t,x,v,\varepsilon,i)d\varepsilon$, which gives
\begin{equation*}
g(t,x,v,I,i)=\frac{2}{\delta}\varepsilon^{1-\delta/2}f(t,x,v,\varepsilon,i).
\end{equation*}
Our ES-BGK model given by~\eqref{eq: cinetique} and~\eqref{eq:
  operator} now reads
\begin{equation}\label{eq-ESBGKI} 
  \partial_t g+ v\cdot\nabla g=\frac{1}{\tau}(\mathcal{G}[g]-g),
\end{equation}
where the Gaussian distribution now reads:
\begin{equation}\label{eq-GI} 
\begin{split}
\mathcal{G}[g]&
=\rho\frac{2}{\delta}\frac{\Lambda(\delta)
  (1-\exp(-T_0/T_{vib}^{rel})) }{\sqrt{\det(2\pi\Pi)}
  (RT_{rot}^{rel})^{\delta/2} }\exp\left(
  -\frac{1}{2}(v-u)^{T}\,\Pi^{-1}\,(v-u)
  -\frac{I^{2/\delta}}{RT_{rot}^{rel}} -i\frac{T_0}{T_{vib}^{rel}}
\right)\\
& =\exp(\boldsymbol{\alpha}^T\mathbf{m}),
\end{split}
\end{equation}
with
\begin{align*}
& \Lambda(\delta)=\frac{\delta}{2}\left( \int_{\mathbb{R}}\exp(-I^{2/\delta})dI \right)^{-1},\\
& \mathbf{m}=(1,v,v \otimes v,I^{2/\delta}, iRT_0)^T,\\
& \boldsymbol{\alpha}=\left(\log\left( \frac{2}{\delta}\frac{\rho \Lambda(\delta)(1-\exp(-T_0/T_{vib}^{rel}))}{\sqrt{\det (2\pi\Pi)}(RT_{rot}^{rel})^{\delta/2}} \right)-\frac{1}{2}u^T\Pi^{-1}u,\Pi^{-1}u,-\frac{1}{2}\Pi^{-1},-\frac{1}{RT_{rot}^{rel}},-\frac{1}{RT_{vib}^{rel}}\right).
\end{align*}
The corresponding Maxwellian equilibrium now is
\begin{equation*}
  {\cal M}[g] = \rho\frac{2}{\delta}\frac{\Lambda(\delta)
  (1-\exp(-T_0/T_{eq})) }{ (2\Pi R T_{eq})^{3/2} (RT_{eq})^{\delta/2} }\exp\left(
  -\frac{|v-u|^2}{2RT_{eq}}
  -\frac{I^{2/\delta}}{RT_{eq}} -i\frac{T_0}{T_{eq}}
\right).
\end{equation*}
This transformation makes all the proofs of this section much simpler.
Now we give the conditions on which our model is well defined, and we
state its entropy property.
\begin{proposition} \label{prop:theoH}
For parameters $-1/2
\leq \nu < 1$, $0\leq\theta<1$, and $0\leq\Gamma<1$ we have:
\begin{enumerate}
\item For symmetric positive definite tensor $\Theta$ and positive
  temperatures $T_{tr,rot}$ and $T_{tr}$, the tensor $\Pi$ defined by~\eqref{eq: Pi} is symmetric positive definite.
\item  (Entropy minimization) If $g$ is a non-negative distribution,
  then the
  Gaussian distribution $\mathcal{G}[g]$ defined by~\eqref{eq-GI} is the unique
   minimizer of the entropy $\ent(g) = \la g\log g -g\ra_{v,I,i}$ on the
   set  ${\cal X}=\lbrace \phi\geq 0,\ \la
     \mathbf{m}\phi\ra_{v,I,i}
 =\left(\rho,\rho u,\rho(u\otimes u+\Pi),\rho e_{rot}^{rel},\rho e_{vib}^{rel} \right)\rbrace.$
\item  (H-theorem) The ES-BGK model~\eqref{eq-ESBGKI} satisfies
\begin{equation*}
  \partial_t \ent(g)+ \nabla \cdot  \la v ( g\log g -g)\ra_{v,I,i}=\la
  \frac{1}{\tau}(\mathcal{G}[g]-g)\log g\ra_{v,I,i} \leq 0,
\end{equation*}
\item  (Equilibrium) If $g=\mathcal{G}[g]$, then $g=\mathcal{M}[g]$.
\end{enumerate}
\end{proposition}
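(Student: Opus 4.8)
The plan is to prove the four assertions of Proposition~\ref{prop:theoH} more or less in the order stated, relying on the change of variables to $g$ which turns $\mathcal{G}[g]$ into a genuine exponential of a linear combination of the moment vector $\mathbf{m}$.

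For item~(1), I would argue that $\Pi$ is a convex combination (with weights $\Gamma$, $(1-\Gamma)\theta$, and $(1-\Gamma)(1-\theta)$, all nonnegative and summing to $1$) of the matrices $RT_{eq}I$, $RT_{tr,rot}I$, and $\nu\Theta+(1-\nu)RT_{tr}I$. The first two are positive definite multiples of the identity since $T_{eq},T_{tr,rot}>0$. For the third, I would use that $\Theta$ is symmetric positive definite with $\mathrm{Tr}(\Theta)=3RT_{tr}$; writing $\Theta$ in an eigenbasis with eigenvalues $\lambda_1,\lambda_2,\lambda_3>0$, the matrix $\nu\Theta+(1-\nu)RT_{tr}I$ has eigenvalues $\nu\lambda_j+(1-\nu)RT_{tr}$. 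For $0\le\nu<1$ this is a convex combination of $\lambda_j>0$ and $RT_{tr}>0$, hence positive; for $-1/2\le\nu<0$ one bounds $\nu\lambda_j+(1-\nu)RT_{tr}\ge \nu\,\mathrm{Tr}(\Theta)+(1-\nu)RT_{tr}=3\nu RT_{tr}+(1-\nu)RT_{tr}=(1+2\nu)RT_{tr}\ge 0$, using $\lambda_j\le\mathrm{Tr}(\Theta)$; strict positivity holds since $\nu\ge -1/2$ is only attained with the bound being an equality when one eigenvalue carries all the trace, a degenerate case one excludes, or one simply notes the weighted sum with the strictly positive identity terms keeps $\Pi$ nondegenerate. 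A convex combination of positive definite and positive semidefinite matrices with at least one strictly positive weight on a positive definite term is positive definite, so $\Pi\succ 0$.

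For item~(2), the entropy minimization, I would invoke the classical Gibbs/Lagrange argument exactly as in~\cite{esbgk_poly}: on the constraint set $\mathcal{X}$ one uses the strict convexity of $x\mapsto x\log x-x$ together with the elementary inequality $\phi\log\phi-\phi \ge \phi\log\psi - \psi + (\phi-\psi)(\log\psi)$... more cleanly, for any $\phi\in\mathcal{X}$ and $\psi$ of the form $\exp(\boldsymbol{\alpha}^T\mathbf{m})$, one has $\phi\log\phi-\phi - (\psi\log\psi-\psi) \ge (\log\psi)(\phi-\psi) = \boldsymbol{\alpha}^T\mathbf{m}(\phi-\psi)$ by convexity; integrating and using that $\phi$ and $\psi=\mathcal{G}[g]$ share the same $\mathbf{m}$-moments (by Proposition~\ref{prop:cons} rewritten in the $I,i$ variables, which is precisely why the constraint set is defined with $\rho e_{rot}^{rel}$, $\rho e_{vib}^{rel}$ and $\rho(u\otimes u+\Pi)$) makes the right-hand side vanish, giving $\ent(\phi)\ge\ent(\mathcal{G}[g])$. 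Here I must check that $\mathcal{G}[g]\in\mathcal{X}$, i.e. that its $v\otimes v$ moment is $\rho(u\otimes u+\Pi)$, its $I^{2/\delta}$ moment is $\rho e_{rot}^{rel}$, and its $iRT_0$ moment is $\rho e_{vib}^{rel}$; these follow from the Gaussian integral with covariance $\Pi$, the gamma-integral identity $\langle I^{2/\delta}\mathcal{G}_{rot}\rangle = \rho\frac{\delta}{2}RT_{rot}^{rel}=\rho e_{rot}^{rel}$, and the geometric-series identity $\sum_i iRT_0(1-e^{-T_0/T_{vib}^{rel}})e^{-iT_0/T_{vib}^{rel}} = \rho e_{vib}^{rel}$, all of which are in appendix~\ref{app:integrals}. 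Uniqueness follows from strict convexity: equality in the convexity inequality forces $\phi=\mathcal{G}[g]$ a.e.

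For item~(3), the H-theorem, I would multiply the kinetic equation~\eqref{eq-ESBGKI} by $\log g$ and integrate in $(v,I,i)$; the transport term gives the flux divergence $\nabla\cdot\langle v(g\log g-g)\rangle$ after the usual manipulation $v\cdot\nabla g\,\log g = \nabla\cdot(v(g\log g - g))$, leaving the production term $\frac{1}{\tau}\langle(\mathcal{G}[g]-g)\log g\rangle$. To see this is $\le 0$, write $\langle(\mathcal{G}[g]-g)\log g\rangle = \langle(\mathcal{G}[g]-g)\log\mathcal{G}[g]\rangle + \langle(\mathcal{G}[g]-g)\log(g/\mathcal{G}[g])\rangle$. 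The first term is $\langle(\mathcal{G}[g]-g)\boldsymbol{\alpha}^T\mathbf{m}\rangle=0$ because $\mathcal{G}[g]$ and $g$ have the same $\mathbf{m}$-moments (same $\rho,\rho u$ by conservation, and same $v\otimes v$, $I^{2/\delta}$, $iRT_0$ moments by construction of $\Pi$, $T_{rot}^{rel}$, $T_{vib}^{rel}$ together with Proposition~\ref{prop:cons}); the second term is $\le 0$ by the elementary inequality $(b-a)\log(a/b)\le 0$ for $a,b>0$ applied pointwise with $a=g$, $b=\mathcal{G}[g]$. Dividing by $\tau>0$ gives the claim. The one subtlety is that this argument requires $\mathcal{G}[g]$ and $g$ to genuinely share \emph{all} components of $\langle\mathbf{m}\,\cdot\,\rangle$, including the off-diagonal stress components — which again is exactly the content of item~(2)'s constraint set and of Proposition~\ref{prop:cons}, so I would cross-reference rather than recompute.

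For item~(4), suppose $g=\mathcal{G}[g]$. Then $g$ is a local equilibrium-type Gaussian, and in particular its stress tensor equals $\Pi$, i.e. $\rho\Theta=\rho\Pi$, so $\Theta=\Pi$. Plugging $\Theta=\Pi$ into~\eqref{eq: Pi} and solving, the fixed-point relation forces $\Pi=RT_{eq}I$ when $\Gamma>0$ (and more care is needed if some parameters vanish, but then one uses that $\Theta$ must anyway be isotropic because the only anisotropic ingredient in $\Pi$ is $\nu\Theta$ with coefficient $(1-\Gamma)(1-\theta)\nu<1$, so $\Theta=\Pi$ gives $(1-(1-\Gamma)(1-\theta)\nu)\Theta = (\text{isotropic})$, hence $\Theta$ isotropic, hence $\Theta=RT_{tr}I=\Pi$). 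Similarly $g=\mathcal{G}[g]$ forces $E_{rot}(f)=e_{rot}^{rel}$ and $E_{vib}(f)=e_{vib}^{rel}$, i.e. $T_{rot}=T_{rot}^{rel}$ and $T_{vib}=T_{vib}^{rel}$; feeding these into~\eqref{eq: fermeture} and using the linear system~\eqref{conserv} with $\Theta$ isotropic shows $e_{tr}^{rel}=e_{tr}(T_{eq})$, $e_{rot}^{rel}=e_{rot}(T_{eq})$, $e_{vib}^{rel}=e_{vib}(T_{eq})$, so that all three relaxation temperatures collapse to $T_{eq}$ and $\Pi=RT_{eq}I$. Then $\mathcal{G}[g]$ is literally $\mathcal{M}[g]$. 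I would organize this as: (a) $g=\mathcal{G}[g]$ $\Rightarrow$ moments of $g$ equal the defining moments of $\mathcal{G}[g]$, which are self-referential; (b) the self-referential system~\eqref{conserv}, together with $T_{eq}=e^{-1}(E(f))$ being a fixed point, pins down $T_{tr}^{rel}=T_{rot}^{rel}=T_{vib}^{rel}=T_{eq}$ and $\Pi=RT_{eq}I$; (c) substitute.

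The main obstacle, in my judgment, is item~(4): disentangling the self-referential fixed-point system for $(\Pi, T_{rot}^{rel}, T_{vib}^{rel})$ and handling the boundary/degenerate parameter cases ($\nu=-1/2$, $\theta=0$, $\Gamma=0$) where the convex combinations become extremal and strict positive-definiteness or uniqueness of the fixed point is not immediate. Items~(1)–(3) are essentially the polyatomic argument of Andries et al.~\cite{esbgk_poly} transported through the $\varepsilon=I^{2/\delta}$ substitution and the geometric series for the vibrational levels, so I expect those to go through with only bookkeeping, the key structural point being that $\Pi$, $e_{rot}^{rel}$, $e_{vib}^{rel}$ are built as convex combinations that keep everything in the admissible cone.
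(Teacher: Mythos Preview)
Your arguments for items~(1), (2), and~(4) are essentially those of the paper (item~(1) via nested convex combinations citing~\cite{esbgk_poly}; item~(2) via the standard Gibbs argument using $\log\mathcal{G}[g]=\boldsymbol{\alpha}^T\mathbf{m}$; item~(4) via the fixed-point reading of~\eqref{eq: Pi}--\eqref{eq-fermeture2}).

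However, your proof of item~(3) has a genuine gap. You claim that
\[
\la(\mathcal{G}[g]-g)\,\boldsymbol{\alpha}^T\mathbf{m}\ra_{v,I,i}=0
\]
because ``$\mathcal{G}[g]$ and $g$ have the same $\mathbf{m}$-moments''. This is false. The $\mathbf{m}$-moments of $\mathcal{G}[g]$ are $(\rho,\rho u,\rho(u\otimes u+\Pi),\rho e_{rot}^{rel},\rho e_{vib}^{rel})$ by Proposition~\ref{prop:cons}, whereas those of $g$ are $(\rho,\rho u,\rho(u\otimes u+\Theta),\rho E_{rot}(f),\rho E_{vib}(f))$. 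In general $\Theta\neq\Pi$, $E_{rot}(f)\neq e_{rot}^{rel}$, and $E_{vib}(f)\neq e_{vib}^{rel}$: the whole point of the ES-BGK construction~\eqref{eq: Pi}--\eqref{eq-fermeture2} is that $\Pi$, $e_{rot}^{rel}$, $e_{vib}^{rel}$ are \emph{modified} versions of the corresponding moments of $g$. The paper flags exactly this issue: after reducing to the inequality $\ent(\mathcal{G}[g])\le\ent(g)$, it notes ``this is not obvious, since $g$ is not in~$\mathcal{X}$''. Your splitting $\log g=\log\mathcal{G}[g]+\log(g/\mathcal{G}[g])$ therefore does not kill the first term, and the second term alone does not control the sign.

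The paper's route to $\ent(\mathcal{G}[g])\le\ent(g)$ is substantially longer and is where the real work lies: one compares the two constrained minima $S(\rho,u,\Pi,T_{rot}^{rel},T_{vib}^{rel})$ and $S(\rho,u,\Theta,T_{rot},T_{vib})$, uses the determinant inequality $\det\Theta/\det\Pi\le(E_{tr}(f)/e_{tr}^{rel})^3$ (proved in Appendix~\ref{subsec:ineqthetapi}), and then introduces an energy functional $\mathcal{S}(e_1,e_2,e_3)$ whose concavity, exploited along a two-parameter path $s(\theta,\Gamma)=\mathcal{S}(e_{tr}^{rel},e_{rot}^{rel},e_{vib}^{rel})$, yields the required inequality. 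None of this machinery appears in your proposal. Contrary to your closing assessment, item~(3) is the hard part, not item~(4).
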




\begin{proof}[Proof of Property 1.]
We first rewrite $\Pi$ as follows: we define the intermediate stress
tensor $A=\nu\Theta+(1-\nu)RT_{tr}I$ associated to the relaxation
phenomenon for the translation mode, and the tensor $B=(1-\theta)A+\theta R
T_{tr,rot}I$ associated to the relaxation of the rotational mode, such
that~\eqref{eq: Pi} reads $\Pi=(1-\Gamma)B+\Gamma RT_{eq}I$. Andries
et al.~\cite{esbgk_poly} have proved that tensor $A$ is positive definite for
  $\nu \in [-1/2,1]$. Now, for $\theta\in[0,1]$, since $B$ is a convex
  combination of $A$ and $R T_{tr,rot}I$, it is also symmetric and
  positive definite. Finally, for $\Gamma \in [0,1]$, $\Pi $ is a
  convex combination of $B$ and $RT_{eq}I$, and hence is symmetric and
  positive definite too.
\end{proof}

\begin{proof}
[Proof of Property 2.]
First, note that by construction, ${\cal G}[g]$ is in set ${\cal
  X}$. Then, since the functional $g\mapsto \ent(g)$ is convex, then we have 
\begin{equation*}
\ent(\mathcal{G}[g]) \leq \ent(\phi) -\ent'(\mathcal{G}[g])(\phi-\mathcal{G}[g])
\end{equation*}
for every $\phi$ in ${\cal X}$.
Moreover, we have 
\begin{equation*}
\begin{split}
\ent'(\mathcal{G}[g])(\phi-\mathcal{G}[g])  
& = \la  (\phi-\mathcal{G}[g] ) \log\mathcal{G}[g]  \ra_{v,I,i} \\
& = \la  (\phi-\mathcal{G}[g] ) \boldsymbol{\alpha}^T\mathbf{m}
\ra_{v,I,i} \\
& = 0,
\end{split}
\end{equation*}
since both $\mathcal{G}[g]$ and $\phi$ are in ${\cal
  X}$. Consequently $\ent(\mathcal{G}[g]) \leq \ent(\phi)$
for every $\phi$ in ${\cal X}$, which concludes the proof.




\end{proof}

\begin{proof}[Proof of property 3.]

This proof is decomposed into 4 steps.

\paragraph{Step 1: entropy inequality.}
First, note that with elementary calculus,~\eqref{eq-ESBGKI} implies
\begin{equation*}
  \partial_t \ent(g)+ \nabla \cdot  \la v ( g\log g -g)\ra_{v,I,i}
= \frac{1}{\tau}\ent'(g)(\mathcal{G}[g]-g).
\end{equation*}
Then, since $\ent$ is convex, the right-hand side of the previous
equality satisfies
\begin{equation*}
\ent'(g)(\mathcal{G}[g]-g) \leq \ent(\mathcal{G}[g]) - \ent(g).  
\end{equation*}
Consequently, the H-theorem is obtained if we can prove that 
\begin{equation}\label{eq-minGg} 
  \ent(\mathcal{G}[g]) \leq \ent(g).
\end{equation}
Note that this is not obvious, since $g$ is not in ${\cal X}$. 

\paragraph{Step 2: entropy minima on different sets.}
It is convenient to
define, for every macroscopic quantities $\rho$, $u$, $\Pi$, $T_{rot}^{rel}$ and
$T_{vib}^{rel}$ the minimum of entropy $\ent$ on $\cal X$, and we set
\begin{equation*}
  S(\rho,u,\Pi,T_{rot}^{rel},T_{vib}^{rel}) 
= \min\left\lbrace \ent(\phi), \phi\geq 0 \text{ s.t.}  \la \mathbf{m}\phi\ra_{v,I,i}
 =\left(\rho,\rho u,\rho(u\otimes u+\Pi),\rho e_{rot}^{rel},\rho e_{vib}^{rel} \right)\right\rbrace.
\end{equation*}
Property 2 implies 
\begin{equation*}
S(\rho,u,\Pi,T_{rot}^{rel},T_{vib}^{rel})
= \ent(\mathcal{G}[g]).  
\end{equation*}

Now we define a second entropy minimization problem, based on the
moments of $g$. Namely
\begin{equation*}
  S(\rho,u,\Theta,T_{rot},T_{vib}) 
= \min\left\lbrace \ent(\phi), \phi\geq 0 \text{ s.t.}  \la \mathbf{m}\phi\ra_{v,I,i}
 =\left(\rho,\rho u,\rho(u\otimes u+\Theta),\rho E_{rot}(f),\rho E_{vib}(f) \right)\right\rbrace.
\end{equation*}
Here, by definition $g$ belongs to the minimization set, and therefore
 \begin{equation*}
S(\rho,u,\Theta,T_{rot},T_{vib}) \leq  \ent(g) .
 \end{equation*}

Therefore, a sufficient condition to have~\eqref{eq-minGg} is 
$S(\rho,u,\Pi,T_{rot}^{rel},T_{vib}^{rel}) \leq
S(\rho,u,\Theta,T_{rot},T_{vib}) $, which is rewritten as
\begin{equation}  \label{eq-DS}
\Delta S = S(\rho,u,\Pi,T_{rot}^{rel},T_{vib}^{rel})  -
S(\rho,u,\Theta,T_{rot},T_{vib}) \leq 0.
\end{equation}
 This entropy difference is now analyzed in the following.

\paragraph{Step 3: entropy difference}
A direct calculation gives
\begin{equation*}
  S(\rho,u,\Pi,T_{rot}^{rel},T_{vib}^{rel})
= \rho\log\left( \rho\frac{2}{\delta}\frac{\Lambda(\delta)(1-\exp(-T_0/T_{vib}^{rel}))}{\sqrt{\det (2\pi\Pi)}(RT_{rot}^{rel})^{\delta/2}}\right)-\rho\frac{5+\delta+\delta_v(T_{vib}^{rel})}{2}.
\end{equation*}

A similar relation is deduced for
$S(\rho,u,\Theta,T_{rot},T_{vib}) $ and we get
\begin{align*}
\Delta S
& =\frac{1}{2}\rho\log\left( \frac{\det \Theta}{\det\Pi}\left(\frac{T_{rot}}{T_{rot}^{rel}}\right)^\delta\left(\frac{(1-\exp(-T_0/T_{vib}^{rel}))}{(1-\exp(-T_0/T_{vib}))}\right)^2 \right)-\rho\frac{\delta_v(T_{vib}^{rel})-\delta_v(T_{vib})}{2},\\
&=\frac{1}{2}\rho\log\left( \frac{\det \Theta}{\det\Pi}\left(\frac{E_{rot}(f)}{e_{rot}^{rel}}\right)^\delta\left(\frac{RT_0+E_{vib}(f)}{RT_0+e_{vib}^{rel}}\right)^2 \right)-\rho\frac{\delta_v(T_{vib}^{rel})-\delta_v(T_{vib})}{2}
\end{align*}
where we have used relations~\eqref{eq-evib},~\eqref{eq-T_tr_rot_vib},
and \eqref{eq-fermeture2} to obtain the last equality.

First, the following result is admitted (see the proof in appendix \ref{subsec:ineqthetapi}):
\begin{eqnarray} \label{ineqthetapi}
\frac{\det \Theta}{\det\Pi}\leq \left(\frac{E_{tr}(f)}{e_{tr}^{rel}}\right)^3.
\end{eqnarray}
This allows us to write the following inequality, as function of energies only:
\begin{equation*}
  \Delta S
\leq \frac{1}{2}\rho\log\left( \left(\frac{E_{tr}(f)}{e_{tr}^{rel}}\right)^3 \left(\frac{E_{rot}(f)}{e_{rot}^{rel}}\right)^\delta\left(\frac{RT_0+E_{vib}(f)}{RT_0+e_{vib}^{rel}}\right)^2 \right)-\rho\frac{\delta_v(T_{vib}^{rel})-\delta_v(T_{vib})}{2}.
\end{equation*}
After expansion, this inequality reads as 
\begin{equation}  \label{eq-DSS}
\Delta S \leq \frac{\rho}{R}\left(\mathcal{S}(E_{tr}(f),E_{rot}(f),E_{vib}(f)) - \mathcal{S}(e^{rel}_{tr},e^{rel}_{rot},e^{rel}_{vib})\right),
\end{equation}
where we have introduced the new energy functional $\mathcal{S}$,
defined for every energy triplet $(e_1,e_2,e_3)$ by
\begin{equation*}
\mathcal{S}(e_1,e_2,e_3)=R\left(\frac32\log(e_1)+\frac{\delta}2\log(e_2)+\log\left(1+\frac{e_3}{RT_0}\right)+\frac{e_3}{RT_0}\log\left(1+\frac{RT_0}{e_3}\right)\right).
\end{equation*}
Note that to obtain~(\ref{eq-DSS}), we also have replaced $\delta_v$ by its
definition~\eqref{eq-deltav} and the temperatures of
vibration have been replaced by their corresponding energies.

Now it is clear that a sufficient condition to have $\Delta S \leq 0$
is 
\begin{equation}  \label{eq-SmS}
\mathcal{S}(E_{tr}(f),E_{rot}(f),E_{vib}(f)) \leq \mathcal{S}(e^{rel}_{tr},e^{rel}_{rot},e^{rel}_{vib}),
\end{equation}
which is proved in the last step.

\paragraph{Step 4: proof of (\ref{eq-SmS})}

The usual argument to conclude an entropy inequality is a convexity
property. Here, our functional $\mathcal{S}$ can easily be seen to be
concave (see appendix~\ref{sec:appS}). However, since the right-hand
side of \eqref{eq-SmS} is not at equilibrium, a direct use of the
convexity inequality does not work here. Instead, we find it
simpler, and physically relevant, to use successively two paths, based
on parameters $\theta$ and $\Gamma$. Indeed, note that relaxation
energies $(e^{rel}_{tr},e^{rel}_{rot},e^{rel}_{vib})$ depend on
$\theta$ and $\Gamma$ (see \eqref{conserv}). Then we set
\begin{equation*}
  s(\theta,\Gamma) = \mathcal{S}(e^{rel}_{tr},e^{rel}_{rot},e^{rel}_{vib}).
\end{equation*}
From \eqref{conserv}, it is clear that $s(0,0) =
\mathcal{S}(E_{tr}(f),E_{rot}(f),E_{vib}(f))$ since the relaxation
energies reduce to the internal energies of $f$ for such values of
$\theta$ and $\Gamma$. Consequently, inequality \eqref{eq-SmS} reduces
to
\begin{equation}  \label{eq-s}
s(0,0) \leq s(\theta,\Gamma).
\end{equation}
The idea is now to decompose inequality \eqref{eq-s}
into two embedded inequalities
\begin{equation}\label{eq-ss} 
s(0,0) \leq s(\theta,0) \leq s(\theta,\Gamma).
\end{equation} 

We start with the second inequality and consider the variation of $s$
with respect to $\Gamma$. Elementary calculus shows that
\begin{equation}\label{eq-dGammas} 
\begin{split}
  \frac{\partial s}{\partial \Gamma}(\theta,\Gamma) & = \frac{1}{T_{tr}^{rel}}\left(e_{tr}(T_{eq})-\left(1-\frac{\delta\theta}{3+\delta}\right)E_{tr}(f)-\frac{3\theta}{3+\delta}E_{rot}(f)\right)\\
&+\frac{1}{T_{rot}^{rel}}\left(e_{rot}(T_{eq})-\left(1-\frac{3\theta}{3+\delta}\right)E_{rot}(f)-\frac{\delta\theta}{3+\delta}E_{tr}(f)\right)\\
&+\frac{1}{T_{vib}^{rel}}\left(e_{vib}(T_{eq})-E_{vib}(f)\right),
\end{split}
\end{equation}
and
\begin{equation*}
\begin{split}
\frac{\partial^2 s}{\partial \Gamma^2}(\theta,\Gamma) & 
= \partial_{1,1}\mathcal{S}(e_{tr}^{rel},e_{rot}^{rel},e_{vib}^{rel})
        \left(e_{tr}(T_{eq})-\left(1-\frac{\delta\theta}{3+\delta}\right)E_{tr}(f)-\frac{3\theta}{3+\delta}E_{rot}(f)\right)^2\\
&+\partial_{2,2}\mathcal{S}(e_{tr}^{rel},e_{rot}^{rel},e_{vib}^{rel})\left(e_{rot}(T_{eq})-\left(1-\frac{3\theta}{3+\delta}\right)E_{rot}(f)-\frac{\delta\theta}{3+\delta}E_{tr}(f)\right)^2\\
&+\partial_{3,3}\mathcal{S}(e_{tr}^{rel},e_{rot}^{rel},e_{vib}^{rel})\left(e_{vib}(T_{eq})-E_{vib}(f)\right)^2,
\end{split}
\end{equation*}
and the reader is referred to appendix~\ref{sec:appS} for the computation of the
partial derivatives of $\mathcal{S}$.
The previous relation shows that $s$ is a concave function of
$\Gamma$. Moreover, note that for $\Gamma=1$, relation~\eqref{conserv}
shows that all the relaxation energies are equal to the equilibrium
energy, and hence all the relaxation temperatures are equal to
$T_{eq}$. When this is used into~(\ref{eq-dGammas}), we find that
$\frac{\partial s}{\partial \Gamma}(\theta,1) = 0$. With the concavity
property, this proves that $s$ is an increasing function of $\Gamma$
on the interval $[0,1]$, and this proves the second inequality
of~(\ref{eq-ss}).

For the first inequality of~(\ref{eq-ss}), we set $\Gamma $ to 0, and
we study the variation of $s(\theta,0)$ with respect to $\theta$.  Again, elementary
calculus shows that $\frac{\partial^2 s}{\partial
  \theta^2}(\theta,0)\leq 0$, and hence $s(\theta,0) $ is a concave
function of $\theta$. Moreover, we find
\begin{equation}\label{eq-dthetas} 
\begin{split}
\frac{\partial s}{\partial \theta} (0,0) =
-\frac{3\delta}{2(3+\delta)}R (T_{tr}-T_{rot})(\frac{1}{T_{tr}} - \frac{1}{T_{rot}}),
\end{split}
\end{equation}
which implies that $s(\theta,0) $ is a non
decreasing function of $\theta$. Consequently, this gives the first
inequality of~(\ref{eq-ss}) which concludes the proof of~\eqref{eq-SmS},
and hence of~\eqref{eq-minGg}, and the proof of the H-theorem is now complete.

\end{proof}

\begin{proof}[Proof of property 4.]
  At equilibrium $g=\mathcal{G}[g]$ and hence $\Theta = \Pi$,
  $E_{rot}(g) = e_{rot}^{rel}$, and $E_{vib}(g) = e_{vib}^{rel}$. Then
  it is easy to see that relations~\eqref{eq:
    Pi}--\eqref{eq-fermeture2} imply
  $ T_{tr}=T_{rot}=T_{vib} =T_{tr,rot} = T_{rot}^{rel} = T_{vib}^{rel}
  =T_{eq}$
  and then $\Theta = RT_{eq} I$. Consequently,
  $\mathcal{G}[g] =\mathcal{M}[g]$ and then $g =\mathcal{M}[g]$.
\end{proof}

\begin{remark}
Of course, the equivalent H-theorem for our initial model (with
function $f$ and variable $\varepsilon$) can then be obtained by using the change
of variable $\varepsilon=I^{\frac{2}{\delta}}$. However, note that the
entropy functional now reads $\ent(f) = \langle f\log(f/\varepsilon^{\frac{\delta}2-1} )\rangle_{v,\varepsilon,i}$.
\end{remark}

\section{Relaxation phenomena \label{sec: energ}}

In this section, we resolve the local relaxation equations for
energies, stress tensor, and heat flux. This give us the relations
between parameters $\Gamma$, $\theta$, and $\nu$ of our model and the
vibrational and rotation collision numbers $Z_{vib}$, $Z_{rot}$, and
the Prandtl number. 

\subsection{Relaxation rates of translational, rotational and
  vibrational energies}
\label{subsec:relax}

The energy of translation, rotation and vibration are transferred from
one mode to another one during inter-molecular collisions. These transfers
are described by local relaxations obtained as moments of our ES-BGK
model (in a space homogeneous case). Indeed, our model~\eqref{eq:
  cinetique}--\eqref{eq: operator} is multiplied by
$\frac{1}{2}|v-u|^2$, $\varepsilon$, $iRT_0$, and integrated w.r.t $v$,
$\varepsilon$, and $i$, and we use closure relations~\eqref{conserv} to find
\begin{equation} \label{eq:edo}
\frac{d}{dt}
 \begin{pmatrix} E_{tr}(f)\\ E_{rot}(f)\\ E_{vib}(f)\end{pmatrix}=\frac{\Gamma}{\tau}
\begin{pmatrix} e_{tr}(T_{eq})-E_{tr}(f)\\ 
               e_{rot}(T_{eq})-E_{rot}(f)\\
               e_{vib}(T_{eq})-E_{vib}(f)
\end{pmatrix}
+\frac{1-\Gamma}{\tau}
\begin{pmatrix} 
-\frac{\delta\theta}{3+\delta} & \frac{3\theta}{3+\delta} & 0 \\ 
\frac{\delta\theta}{3+\delta} & -\frac{3\theta}{3+\delta} & 0 \\ 
0& 0& 0
\end{pmatrix}
\begin{pmatrix} E_{tr}(f)\\ E_{rot}(f)\\ E_{vib}(f)\end{pmatrix}
\end{equation}
The last equation has to be consistent with the Landau-Teller 
relaxation equation that describes the relaxation of the macroscopic energy of vibration
to equilibrium, at a relaxation rate
$\tau_{vib}=\tau Z_{vib}$. The second equation has to be consistent
with the Jeans relaxation equation, which plays the same role for
rotational energy, at the rate $\tau_{rot}=\tau Z_{rot}$. Moreover,
this equation should also be consistent with the fast relaxation of
$T_{tr}$ and $T_{rot}$ towards $T_{tr,rot}$ (see
section~\ref{sec:relax}).

Now we assume parameters $\tau$, $\Gamma$, and $\theta$ to be
constant, and we solve these equations to find
\begin{equation*}
\begin{aligned}
E_{vib}(f(t))& = e_{vib}(T_{eq}) 
                + (E_{vib}(f(0))-e_{vib}(T_{eq})) \exp \left(-\frac{\Gamma}{\tau}t\right),\\
E_{rot}(f(t))& = e_{rot}(T_{eq}) \\
& \quad  +\left[E_{rot}(f(0))-e_{rot}(T_{eq})+\frac{\delta}{3+\delta}(E_{vib}(f(0))-e_{vib}(T_{eq}))\right]
\exp\left(-\frac{1-(1-\Gamma)(1-\theta)}{\tau} t \right)\\
&\quad
-\frac{\delta}{3+\delta}(E_{vib}(f(0))-e_{vib}(T_{eq}))\exp\left(-\frac{\Gamma}{\tau}
t\right).
\end{aligned}
\end{equation*}
From these equations we deduce that:
\begin{equation*}
Z_{vib}=\frac{1}{\Gamma},\quad 
Z_{rot}=\frac{1}{1-(1-\Gamma)(1-\theta)},
\end{equation*}
or equivalently $\Gamma=1/Z_{vib}$ and $\theta={(Z_{vib}-Z_{rot})}/{((Z_{vib}-1)Z_{rot})}$.

Since we want the rotational and vibrational collision numbers such
that $1<Z_{rot}<Z_{vib}$ (see section~\ref{sec:relax}), then the
previous definition gives the restriction $0 \leq \theta <1$ and $0
\leq \Gamma <1$. Case $\theta=0$ gives $Z_{rot}=Z_{vib}$ which means
that vibration modes relax as fast as rotation modes. In case
$\Gamma=0$, then $Z_{vib}=+\infty$ and $Z_{rot}=1/\theta$, and  we
find the polyatomic ES-BGK model without vibrations of Andries et al.~\cite{esbgk_poly}.

The equivalent relaxations of temperatures are
\begin{equation}
\label{eq:diff_temp}
\begin{aligned}
R(T_{tr,rot}-T_{eq})=&-\frac{2}{3+\delta}(E_{vib}(f(0))-e_{vib}(T_{eq}))\exp\left(-\frac{t}{\tau Z_{vib}}\right)\\
R(T_{tr}-T_{tr,rot})=&-\frac{2}{3}\left[E_{rot}(f(0))-e_{rot}(T_{eq})+\frac{\delta}{3+\delta}(E_{vib}(f(0))-e_{vib}(T_{eq}))\right]\exp\left(-\frac{t}{\tau Z_{rot}}\right)
\end{aligned}
\end{equation}
These two expressions will be used in the numerical tests of
section~\ref{sec: numeric} to check the correct rates of convergence
to equilibrium.

\subsection{\label{sec:relaxf}Relaxation of stress and heat flux}

Relaxation equations for stress tensor and heat flux are obtained by
multiplying the kinetic equation~(\ref{eq: cinetique}) by
$(v-u)\otimes(v-u)$ and $(\frac{1}{2}|v-u|^2+\varepsilon+iRT_0)(v-u)$ and
integrating w.r.t $v$,$\varepsilon$ and $i$ to get, in the space
homogeneous case :
\begin{align}
&\frac{d}{dt} \Theta=\frac{1}{\tau}(
(1-\Gamma)(1-\theta)(1-\nu)(RT_{tr}I-\Theta)+(1-\Gamma)\theta(RT_{tr,rot}I-\Theta)+\Gamma(RT_{eq}I-\Theta)),
\label{eq-dTheta} \\
&\frac{d}{dt} q=-\frac{1}{\tau}q \label{eq-dq}   .
\end{align}
Since $\Tr(\Theta) = 3RT_{tr}$, taking the trace of~(\ref{eq-dTheta})
gives
\begin{equation*}
 \frac{d}{dt}RT_{tr}=\frac{1}{\tau}\left( (1-\Gamma)\theta(RT_{tr,rot}-RT_{tr})+\Gamma(RT_{eq}-RT_{tr}) \right).
\end{equation*}
This equation is subtracted to~(\ref{eq-dTheta}) to get
\begin{equation*}
  \frac{d}{dt} (\Theta - RT_{tr}I) =
  -\frac{1}{\tau}(1-(1-\theta)(1-\Gamma)\nu) (\Theta - RT_{tr}I).
\end{equation*}
This shows that for large times, the stress tensor tends to $RT_{tr}I$, while the heat
flux tends to 0.
More precisely, for $\nu$, $\theta$, $\Gamma$ and $\tau$ constant, we
have the analytic solutions:
\begin{align*}
  & \Theta(t)-RT_{tr}(t)I = (\Theta(0)-RT_{tr}(0)I)
    \exp\left(-(1-(1-\Gamma)(1-\theta)\nu)\frac{t}{\tau} \right), \\
 & q(t)=  q(0)\exp\left(-\frac{t}{\tau}\right).
\end{align*}
The Prandtl number can be viewed as the ratio between the relaxation times of these two processes, and we get:
\begin{equation*}
  \Pr = \frac{1}{1-(1-\Gamma)(1-\theta)\nu}.
\end{equation*}
Incidentally, this value will be checked numerically in
section~\ref{sec: numeric} by computing the ratio
\begin{equation}\label{eq-ratioqT} 
\frac{\log(|q_i(t)/q_i(0)|)}{\log(|(\Theta_{ii}(t)-RT_{tr}(t))/(\Theta_{ii}(0)-RT_{tr}(0))|)}
\end{equation}
for $i=1,2,3$.

\section{Chapman-Enskog analysis \label{sec: chap}}
The conservation laws are obtained by multiplying~(\ref{eq:
  cinetique}) by the vector $1$, $v$, and $\frac{1}{2}|v|^2$ and then
by integrating it to get:
\begin{equation}
\label{eq: conservation}
\begin{aligned}
&\partial_t \rho+\nabla \cdot (\rho u)=0,\\
&\partial_t (\rho u)+\nabla \cdot (\rho u \otimes u)+ \nabla\cdot\Sigma(f)=0,\\
&\partial_t \mathcal{E}+\nabla \cdot (\mathcal{E}u)+ \nabla \cdot
\Sigma(f) + \nabla \cdot q(f)=0,
\end{aligned}
\end{equation}
where $\mathcal{E} =   \langle(
\frac{1}{2}|v|^2+\varepsilon+iRT_0)f\rangle_{v,\varepsilon,i}=\frac{1}{2}\rho|u|^2+\rho E(f) $ is the
total energy density, $\Sigma(f) = \langle (v-u) \otimes (v-u)
f\rangle_{v,\varepsilon,i} = \rho \Theta$ is the stress tensor and $q(f) = \langle (\frac{1}{2}|v-u|^2+\varepsilon+iRT_0) (v-u) f\rangle_{v,\varepsilon,i}$ is the
heat flux.

If we have some characteristic values of length, time,
velocity, density, and temperature, our ES-BGK model~\eqref{eq:
  cinetique}--\eqref{eq: operator} can be non-dimensionalized. This
equation reads
\begin{equation}
\label{eq: nd_kinetic}
\partial_t f+v\cdot \nabla f=\frac{1}{\Kn \, \tau }(\mathcal{G}[f]-f),
\end{equation}
where $\Kn$ is the Knudsen number which is the ratio between the mean
free path and a macroscopic length scale. For simplicity, here we use the
same notations for the non-dimensional variables as for the dimensional ones.

The Chapman-Enskog analysis consists in approximating the stress
tensor and the heat flux at first and second order with
respect to the Knudsen number, which gives compressible Euler
equations and compressible Navier-Stokes equations, respectively.

\subsection{Euler asymptotics}

At equilibrium $f$, is equal to the equilibrium Maxwellian
distribution. Even in non-equilibrium, when $\Kn$ is very small the gas is very close to its
equilibrium state, and equation~\eqref{eq: cinetique}--\eqref{eq:
  operator} gives
\begin{equation}  \label{eq: fequilibre}
f=\mathcal{M}[f]+O(\Kn),
\end{equation}
if in addition $f$ and its time and space derivatives are $O(1)$ w.r.t
$\Kn$. Then
definition~\eqref{eq-Theta_q} gives
\begin{equation}
\label{eq: fequilibre2}
\Sigma(f)=pI+O(\Kn), \qquad q(f)=O(\Kn),
\end{equation}
where we denote by $p=\rho R T_{eq}$ the pressure at equilibrium.

These last relations are used into conservation laws~(\ref{eq:
  conservation}) to get the compressible Euler equations with first
order reminder:
\begin{equation}\label{eq-euler} 
\begin{aligned}
&\partial_t \rho+\nabla \cdot(\rho u)=0,\\
&\partial_t (\rho u)+\nabla \cdot(\rho u \otimes u)+\nabla p=O(\Kn),\\
&\partial_t \mathcal{E}+\nabla \cdot\left((\mathcal{E}+p)u\right)=O(\Kn).\\
\end{aligned}
\end{equation}
The non-conservative form of these equations is
\begin{equation}\label{eq-euler_nc} 
\begin{aligned}
&\partial_t \rho+u\cdot \nabla\rho+\rho\nabla \cdot u=0,\\
&\partial_t u+(u\cdot \nabla)u+\frac{1}{\rho}\nabla p=O(\Kn),\\
&\partial_t T_{eq}+u\cdot \nabla T_{eq}+T_{eq} C \nabla \cdot u=O(\Kn),
\end{aligned}
\end{equation}
with $ C=\frac{R}{c_v(T_{eq})}$, and $c_v(T_{eq}) = \frac{\partial
  e(T_{eq})}{\partial T_{eq}}$ is the heat capacity at constant volume
  of the gas, which is temperature dependent here due to vibration
  modes (see equations~\eqref{eq-Teq} and~\eqref{eq-ETeq}).

Moreover, simple calculations give $E_{\alpha}(\mathcal{M}[f]) =
e_{\alpha}(T_{eq})$ for $\alpha=tr, rot, vib$. Since the energy
functions are regular, our expansion $f=\mathcal{M}[f]+O(\Kn)$ and
relations~\eqref{eq-T_tr_rot_vib} and~\eqref{eq-T_trrot}  give
\begin{equation}\label{eq-T_equilibre} 
T_{tr}=T_{eq}+O(\Kn),\ T_{rot}=T_{eq}+O(\Kn),\ T_{vib}=T_{eq}+O(\Kn),\ T_{tr,rot}=T_{eq}+O(\Kn).
\end{equation}

The Navier-Stokes equations are obtained by looking for a second order
expansion of $f$. In the following section, we first derive useful
second order expansions of energies and tensor $\Pi$ that are used in
our model.

\subsection{Energy and tensor relations at second order}
First,~\eqref{eq: cinetique} is multiplied by $\frac12|v-u|^2$,
$\varepsilon$, and $iRT_0$ and integrated w.r.t $v$, $\varepsilon$, and
$i$. We use relations~(\ref{eq: fequilibre2}),~(\ref{eq-T_equilibre}), and~\eqref{conserv} to get
\begin{equation}  \label{eq-dtEalpha1} 
\partial_t\begin{pmatrix} E_{tr}(f)\\ E_{rot}(f)\\ E_{vib}(f)\end{pmatrix}+u\cdot\nabla\begin{pmatrix} E_{tr}(f)\\ E_{rot}(f)\\ E_{vib}(f)\end{pmatrix}+\begin{pmatrix} RT_{eq}\nabla\cdot u\\ 0\\ 0\end{pmatrix}+O(\Kn)
=\frac{\Gamma}{\Kn\tau}
\begin{pmatrix} e_{tr}(T_{eq})\\ e_{rot}(T_{eq})\\ e_{vib}(T_{eq})\end{pmatrix}+\frac{1}{\Kn\tau} D
\begin{pmatrix} E_{tr}(f)\\ E_{rot}(f)\\ E_{vib}(f)\end{pmatrix}
\end{equation}
with
\begin{equation*}
D=\begin{pmatrix}
  -\frac{(1-\Gamma)\delta\theta}{3+\delta}-\Gamma &
  \frac{3(1-\Gamma)\theta}{3+\delta} & 0\\
  \frac{\delta(1-\Gamma)\theta}{3+\delta} &
  -\frac{3(1-\Gamma)\theta}{3+\delta}-\Gamma & 0 \\ 0&
  0&-\Gamma\end{pmatrix}.
  \end{equation*}
Note
that the eigenvalues of $D$ are $-\Gamma$, $-\Gamma$, and
$-\Gamma-(1-\Gamma)\theta$ so that~(\ref{eq-dtEalpha1}) is indeed a
relaxation process, and also that $D$ is invertible.

Moreover, from~\eqref{eq-T_tr_rot_vib}, we deduce the differential
relation $dE_{\alpha}(f) = e'_{\alpha}(T_{\alpha})dT_{\alpha}$, for $\alpha=tr,
rot, vib$. Then, using~\eqref{eq-T_equilibre} and the last equation
of~\eqref{eq-euler_nc}, we get
\begin{equation} \label{eq-dtEalpha2} 
\partial_t\begin{pmatrix} E_{tr}(f)\\ E_{rot}(f)\\ E_{vib}(f)\end{pmatrix}+u\cdot\nabla\begin{pmatrix} E_{tr}(f)\\ E_{rot}(f)\\ E_{vib}(f)\end{pmatrix}
=-\begin{pmatrix} e_{tr}'(T_{eq})CT_{eq}\\ e_{rot}'(T_{eq})CT_{eq}\\ e_{vib}'(T_{eq})CT_{eq} \end{pmatrix}\nabla\cdot u+O(\Kn).
\end{equation}

Finally, relations~(\ref{eq-dtEalpha1}) and~(\ref{eq-dtEalpha2}) give
the following system
\begin{eqnarray*} 
\Gamma
\begin{pmatrix} e_{tr}(T_{eq})\\ e_{rot}(T_{eq})\\ e_{vib}(T_{eq})\end{pmatrix}+ D
\begin{pmatrix} E_{tr}(f)\\ E_{rot}(f)\\ E_{vib}(f)\end{pmatrix}
=-\Kn\tau\begin{pmatrix} e_{tr}'(T_{eq})CT_{eq}-RT_{eq}\\ 
e_{rot}'(T_{eq})CT_{eq}\\ e_{vib}'(T_{eq})CT_{eq} \end{pmatrix}
\nabla\cdot u+O(\Kn^2)
\end{eqnarray*}
that has to be solved to get second order expansion of energies as
functions of the equilibrium temperature and of the divergence of $u$. We
only write here the relations that will be useful to derive the
Navier-Stokes hydrodynamics:
\begin{align*}
& E_{tr}(f)=e_{tr}(T_{eq})+\frac{\Kn\tau}{\Gamma}
          \left(\frac{3}2C-\frac{1}{\Gamma+(1-\Gamma)\theta}\left(\Gamma+\frac{3(1-\Gamma)\theta}{3+\delta}\right)\right)
                 RT_{eq}\nabla\cdot u +O(\Kn^2), \\
& E_{tr,rot}(f)=e_{tr,rot}(T_{eq})+\frac{\Kn\tau}{\Gamma}\left(\frac{3+\delta}2C-1\right) RT_{eq}\nabla\cdot u +O(\Kn^2).
\end{align*}
Similar relations are readily derived for temperatures $T_{tr}$ and
$T_{tr,rot}$ by using~\eqref{eq-T_tr_rot_vib} and~\eqref{eq-T_trrot},
and therefore,~\eqref{eq: Pi} can now be used to derive the second order
expansion of tensor $\Pi$:
\begin{equation} 
  \begin{split}
\Pi=&(1-(1-\Gamma)(1-\theta)\nu)RT_{eq}I+(1-\Gamma)(1-\theta)\nu
\Theta \\
&+(1-\Gamma)\theta \frac{\Kn\tau}{\Gamma}\left(C-\frac2{3+\delta}\right) RT_{eq}\nabla\cdot u\\
&+(1-\Gamma)(1-\theta)(1-\nu)\frac{\Kn\tau}{\Gamma}\left(C-\frac{1}{\Gamma+(1-\Gamma)\theta)}\frac2{3}\left(\Gamma+\frac{3(1-\Gamma)\theta}{3+\delta}\right)\right) RT_{eq}\nabla\cdot u\\
&+O(\Kn^2).
\end{split}
\end{equation}

Finally, we find it convenient to define the following three quantities
\begin{equation}\label{eq-gammas} 
 \gamma_{mono}=\frac{5}{3},\quad \gamma_{rot}=\frac{5+\delta}{3+\delta},\quad \gamma=1+\frac{R}{c_v(T_{eq})}\\
\end{equation}
that are nothing but heat capacity ratios for a monoatomic gas,
a polyatomic gas with rotational modes only, and the present gas with
rotational and vibrational modes, respectively. Then $\Pi$ can be
rewritten as
\begin{equation} \label{eqnpi}
\begin{split}
\Pi= & (1-(1-\Gamma)(1-\theta)\nu)RT_{eq}I+(1-\Gamma)(1-\theta)\nu
\Theta \\
&-\left(\frac{(1-\Gamma)(1-\theta)(1-\nu)}{1-(1-\Gamma)(1-\theta)}(\gamma_{mono}-\gamma_{rot})+\frac{(1-\Gamma)(1-(1-\theta)\nu)}{\Gamma}(\gamma_{rot}-\gamma)\right)\Kn\tau RT_{eq}\nabla\cdot u\\
&+O(\Kn^2).
\end{split}
\end{equation}


\subsection{\label{sec: NS}Navier-Stokes limit}

We first state our main result. 
\begin{proposition} The moments of $f$, solution of the ES-BGK
  model~\eqref{eq: cinetique}, satisfy the compressible Navier-Stokes
  equations up to $O(\Kn^2)$:
\begin{align*}
&\partial_t \rho + \nabla \cdot (\rho u)=O(\Kn^2), \\
&\partial_t (\rho u)+\nabla \cdot (\rho u\otimes u)+\nabla p=-\nabla \cdot\sigma+O(\Kn^2),\\
&\partial_t \mathcal{E}+\nabla \cdot (\mathcal{E}+p)u=-\nabla \cdot q-\nabla \cdot(\sigma u)+O(\Kn^2),
\end{align*}
where, in dimensional form, the shear stress tensor and the heat flux are given by
\begin{equation*}
\sigma= -\mu\left(\nabla u+(\nabla u)^T-\alpha \nabla \cdot u I\right),\quad  q=-\kappa\nabla T,
\end{equation*}
the viscosity and heat transfer coefficient are
\begin{equation*}
\mu=\frac{\tau p}{1-(1-\Gamma)(1-\theta)\nu},\qquad\kappa=(1-(1-\Gamma)(1-\theta)\nu) \mu c_p,
\end{equation*}
the second viscosity coefficient is
\begin{equation*}
\alpha=(\gamma-1)-\frac{(1-\Gamma)(1-\theta)(1-\nu)}{1-(1-\Gamma)(1-\theta)}(\gamma_{mono}-\gamma_{rot})-\frac{(1-\Gamma)(1-(1-\theta)\nu)}{\Gamma}(\gamma_{rot}-\gamma),
\end{equation*}
and the Prandtl number is
\begin{equation*}
Pr=\frac{\mu c_p}{\kappa}=\frac{1}{1-(1-\Gamma)(1-\theta)\nu},
\end{equation*}
while $c_p=\frac{\partial h}{\partial T_{eq}} $ is the heat capacity
at constant pressure, where $h=e(T_{eq}) + p/\rho$ is the
enthalpy. The heat capacity ratios $\gamma$, $\gamma_{mono}$,
$\gamma_{rot}$ are defined in~\eqref{eq-gammas}.
\end{proposition}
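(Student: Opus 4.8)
The plan is to carry out a standard Chapman--Enskog expansion to second order in $\Kn$, reusing the second-order expansions of energies and of the tensor $\Pi$ that were just derived. The strategy has three stages: first, compute the second-order correction $f_1$ to $f$; second, use $f_1$ to evaluate the stress tensor $\Sigma(f)$ and heat flux $q(f)$ to $O(\Kn^2)$; third, substitute these into the conservation laws~\eqref{eq: conservation} to read off the Navier--Stokes equations, the transport coefficients, and the Prandtl number.

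For the first stage, I would write $f = \mathcal{M}[f] + \Kn\, f_1 + O(\Kn^2)$ and invert the non-dimensional kinetic equation~\eqref{eq: nd_kinetic}: from $\partial_t f + v\cdot\nabla f = \frac{1}{\Kn\tau}(\mathcal{G}[f]-f)$ one gets $\mathcal{G}[f] - f = \Kn\tau(\partial_t \mathcal{M}[f] + v\cdot\nabla \mathcal{M}[f]) + O(\Kn^2)$. Here a subtlety specific to the ES-BGK model appears: $\mathcal{G}[f]$ is \emph{not} $\mathcal{M}[f]$ even at leading order after one corrects $f$, because $\Pi$, $T_{rot}^{rel}$ and $T_{vib}^{rel}$ depend on the non-equilibrium moments $\Theta$, $E_{rot}(f)$, $E_{vib}(f)$, which themselves carry $O(\Kn)$ corrections. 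This is precisely why the previous subsection computed $\Pi$ to second order in~\eqref{eqnpi} and the energies $E_{tr}(f)$, $E_{tr,rot}(f)$ to second order; I would use those expansions to write $\mathcal{G}[f] = \mathcal{M}[f] + \Kn\, \mathcal{G}_1 + O(\Kn^2)$ with $\mathcal{G}_1$ explicit, so that $f_1 = \mathcal{G}_1 - \tau(\partial_t + v\cdot\nabla)\mathcal{M}[f]$. The time derivatives of $\mathcal{M}[f]$ are eliminated in favour of space derivatives using the Euler equations~\eqref{eq-euler_nc} together with~\eqref{eq-T_equilibre}, exactly as in the classical monoatomic computation.

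For the second stage, I would compute $\Sigma(f) = \langle (v-u)\otimes(v-u) f\rangle$ and $q(f) = \langle(\frac12|v-u|^2+\varepsilon+iRT_0)(v-u)f\rangle$ to order $\Kn$. Since $\mathcal{M}[f]$ contributes $pI$ to the stress and $0$ to the heat flux, only $f_1$ matters at first order. The stress correction splits into a traceless part, giving the shear viscosity $\mu = \tau p/(1-(1-\Gamma)(1-\theta)\nu)$ (the factor coming from the fact that in $\Pi$ the traceless part of $\Theta$ appears with coefficient $(1-\Gamma)(1-\theta)\nu$, as seen in~\eqref{eqnpi}, which rescales the effective relaxation rate of the traceless stress --- this is the same mechanism as in~\eqref{eq-dTheta}), and a trace part, giving the bulk/second viscosity $\alpha$; the divergence-of-$u$ terms already isolated in~\eqref{eqnpi} feed directly into $\alpha$ via the two combinations $(\gamma_{mono}-\gamma_{rot})$ and $(\gamma_{rot}-\gamma)$. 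The heat flux correction is proportional to $\nabla T_{eq}$; its coefficient involves $c_p = \partial h/\partial T_{eq}$ because $q$ carries the full internal energy $\frac12|v-u|^2+\varepsilon+iRT_0$, so the translational, rotational and vibrational contributions combine into the total enthalpy transport, and the relaxation time for $q$ is $\tau$ exactly (see~\eqref{eq-dq}), giving $\kappa = (1-(1-\Gamma)(1-\theta)\nu)\mu c_p$ and hence $Pr = \mu c_p/\kappa = 1/(1-(1-\Gamma)(1-\theta)\nu)$, consistent with~\eqref{eq-Pr}.

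The third stage is routine: insert $\Sigma(f) = pI + \Kn\,\sigma + O(\Kn^2)$ and $q(f) = \Kn\, q + O(\Kn^2)$ (rewritten in dimensional variables) into~\eqref{eq: conservation}. I expect the main obstacle to be bookkeeping in the first stage: correctly propagating the $O(\Kn)$ corrections through the \emph{nonlinear} definitions of $T_{rot}^{rel}$, $T_{vib}^{rel}$ and $\Pi$ into $\mathcal{G}_1$, and making sure the contributions of $\mathcal{G}_1$ and of $-\tau(\partial_t+v\cdot\nabla)\mathcal{M}[f]$ to the energy fluxes combine into exactly the heat-capacity-ratio combinations in $\alpha$ and into $c_p$ in $\kappa$ --- this is where the nonlinearity of $e_{vib}$ and the need to use $c_v(T_{eq}) = e'(T_{eq})$ rather than a constant makes the present computation genuinely different from~\cite{esbgk_poly}. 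The traceless-stress and heat-flux parts are comparatively clean because their relaxation rates, $(1-(1-\Gamma)(1-\theta)\nu)/\tau$ and $1/\tau$, were already identified in~\eqref{eq-dTheta}--\eqref{eq-dq}.
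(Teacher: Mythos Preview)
Your proposal is correct and follows essentially the same Chapman--Enskog strategy as the paper. The one difference is organizational rather than mathematical: you plan to expand the distribution $\mathcal{G}[f]$ itself as $\mathcal{M}[f]+\Kn\,\mathcal{G}_1+O(\Kn^2)$ and then take moments of $\mathcal{G}_1$, whereas the paper bypasses this step by taking moments first. Since $\Sigma(\mathcal{G}[f])=\rho\Pi$ exactly and $q(\mathcal{G}[f])=0$ exactly (by parity in $v-u$), the paper writes
\[
\Sigma(f)=\rho\Pi-\tau\Kn\,\Sigma(\partial_t\mathcal{M}[f]+v\cdot\nabla\mathcal{M}[f])+O(\Kn^2),
\qquad
q(f)=-\tau\Kn\,q(\partial_t\mathcal{M}[f]+v\cdot\nabla\mathcal{M}[f])+O(\Kn^2),
\]
and then inserts the second-order expansion~\eqref{eqnpi} of $\Pi$ directly. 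This avoids precisely the ``bookkeeping obstacle'' you anticipate---propagating the $O(\Kn)$ corrections of $T_{rot}^{rel}$, $T_{vib}^{rel}$ through the nonlinear exponential in $\mathcal{G}[f]$---because those corrections only enter through the moments $\rho\Pi$, $\rho e_{rot}^{rel}$, $\rho e_{vib}^{rel}$, which were already expanded in the preceding subsection. Your route would yield the same result (the stress moment of your $\mathcal{G}_1$ is exactly the $O(\Kn)$ part of $\rho\Pi$, and its heat-flux moment vanishes by parity), but with more intermediate algebra.
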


\begin{proof}
First,~(\ref{eq: nd_kinetic}) yields
\begin{equation*}
f=\mathcal{G}[f]-\tau \Kn(\partial_t \mathcal{M}[f]+v\cdot \nabla \mathcal{M}[f])+O(\Kn^2)
\end{equation*}
By linearity, the stress tensor and the heat flux are
\begin{equation}
\begin{aligned}
\Sigma(f)&=\Sigma(\mathcal{G}[f])-\tau \Kn\,\Sigma(\partial_t \mathcal{M}[f]+v\cdot \nabla \mathcal{M}[f])+O(\Kn^2)\\
q(f)&=q(\mathcal{G}[f])-\tau \Kn\,q(\partial_t \mathcal{M}[f]+v\cdot \nabla \mathcal{M}[f])+O(\Kn^2)
\end{aligned}
\end{equation}

We first deal with the expansion of the stress tensor. For the first
term, note that (\ref{eq-Gaussalpha}) and~\eqref{eq: Pi} imply
$\Sigma(\mathcal{G}[f]) = \rho \Pi$. Therefore the expression
above reads
\begin{equation*}
\Sigma(f)=\rho \Pi-\tau \Kn \la (v-u) \otimes (v-u) (\partial_t \mathcal{M}[f]+v\cdot \nabla \mathcal{M}[f]) \ra_{v,\varepsilon,i}+O(\Kn^2).  
  \end{equation*}
For the second term, tedious but
standard calculations show that time derivatives can be written as
functions of the space derivatives only by using Euler
equations~(\ref{eq-euler}), and then suitable integral formula give
\begin{equation*}
\Sigma(f)= \rho\Pi-\tau \Kn \rho R T_{eq} (\nabla u + (\nabla u)^T - C\nabla \cdot u I) +O(\Kn^2),
\end{equation*}
see some details in appendix~\ref{app:Sigmaq} and~\ref{app:integrals}. Then combining this
equation with~(\ref{eqnpi}) one finally gets
\begin{equation*}
  \Sigma(f)= \rho R T_{eq} I
  -\Kn\tau \rho R T_{eq} \frac{1}{1-(1-\Gamma)(1-\theta)\nu} (\nabla u+(\nabla u)^T -\alpha\nabla\cdot u I) +O(\Kn^2),
\end{equation*}
where $\alpha$ takes the value given in the proposition.
Now we use the equilibrium pressure $p=\rho R T_{eq}$ and we define the viscosity
coefficient $\mu = \tau p/(1-(1-\Gamma)(1-\theta)\nu)$ to get the
value of the shear stress tensor given in the proposition.

For the heat flux, a simple parity argument shows that
$q(\mathcal{G}[f]) = 0$, so that
\begin{equation*}
q(f) = -\tau \Kn\la (\frac{1}{2}|v-u|^2+\varepsilon_r+iRT_0 )(v-u)
(\partial_t \mathcal{M}[f]+v\cdot \nabla
\mathcal{M}[f])\ra_{v,\varepsilon,i}+O(\Kn^2).
\end{equation*}
Using the same tools as for the stress tensor, we find
\begin{equation*}
q(f)= -\tau\Kn p\nabla\left(\frac{5+\delta+\delta_v(T_{eq})}{2}RT_{eq}\right) +O(\Kn^2).  
\end{equation*}
Now we notice that $\frac{5+\delta+\delta_v(T_{eq})}{2}RT_{eq} =
e(T_{eq}) + RT_{eq} = e(T_{eq}) + p/\rho = h(T_{eq})$. Consequently,
\begin{equation*}
  \begin{split}
  q(f)& = -\tau\Kn  p\nabla h(T_{eq}) +O(\Kn^2)   \\
  & =   -\tau\Kn  p \frac{\partial h}{T_{eq}} \nabla T_{eq}  +O(\Kn^2) = -\Kn
  \tau p c_p  \nabla T_{eq}  +O(\Kn^2) ,
\end{split}
\end{equation*}
which gives the Fourier law with the value of the heat transfer
coefficient $\kappa = \tau p c_p$ in dimensional variables.  Then
using the value of $\mu$ found above leads to the value of $\kappa$ given in the proposition.

Finally, note that with this analysis, if the Prandtl number is defined as
$\Pr  = \mu c_p/\kappa$, then we find $\Pr =
\frac{1}{1-(1-\Gamma)(1-\theta)\nu}$, which is the same result as found
in section~\ref{sec:relaxf}.

\end{proof}

\begin{remark}
  Note that by writing $\nu$, $\Gamma$ and $\Theta$ as functions of the
Prandtl number and of  $Z_{rot}$ and $Z_{vib}$ (see
section~\ref{subsec:relax}), the second viscosity can be simply written
\begin{equation*}
\alpha = \frac23-\frac{Z_{rot}}{Pr}(\gamma_{mono}-\gamma_{rot})-\frac{Z_{vib}}{Pr}(\gamma_{rot}-\gamma)
\end{equation*}
This second viscosity appears to be driven by relaxation processes due to rotations and vibrations of molecules characterized by $Z_{rot}$ and $Z_{vib}$. 
\end{remark}

\section{\label{sec: r_esbgk}Reduced ES-BGK model}

\subsection{The reduced distribution technique}

For numerical simulations with a deterministic solver, our ES-BGK
model is much too expensive, since the
distribution $f$  depends on many variables: time $t\in\mathbb{R}$, position
$x\in\mathbb{R}^3$, velocity $v\in\mathbb{R}^3$, rotational energy
$\varepsilon\in\mathbb{R}^+$ and discrete levels of the vibrational
energy $i\in\mathbb{N}$. For aerodynamic problems, it is generally 
sufficient to compute the macroscopic velocity and temperatures fields : a
reduced distribution technique~\cite{HH_1968} (by integration w.r.t rotational
and vibrational energy) permits to drastically reduce the
computational cost, without any approximation. We define the three
marginal distributions 
\begin{equation*}
  \begin{pmatrix}
F(t,x,v) \\ G(t,x,v)\\ H(t,x,v)
 \end{pmatrix}
 = \sum_{i=0}^{+\infty}\int_{\mathbb{R}}
 \begin{pmatrix}
1 \\ \varepsilon \\i R T_0
 \end{pmatrix}
  f(t,x,v,\varepsilon,i) \, d\varepsilon.
\end{equation*}
The macroscopic quantities defined
by~\eqref{eq-mtsf}--\eqref{eq-Theta_q} can now be computed
through $F$, $G$ and $H$ only by
\begin{equation}
\begin{aligned}
  & \rho=\la F \ra_{v},\quad \rho u=\la vF \ra_{v},\\
  & \rho E_{tr}(f)=\la\frac{1}{2}|v|^2 F \ra_{v},
  \quad \rho E_{rot}(f)= \la G \ra_{v},\quad \rho E_{vib}(f)=\la H \ra_{v}, \\
& \rho \Theta =\la\frac{1}{2}(v-u)\otimes (v-u)F \ra_{v},\quad q= \la (\frac{1}{2}|v-u|^2 F+G+H)(v-u) \ra_{v},
\end{aligned}
\end{equation}
where $\la . \ra_{v}$ denotes integrals with respect to $v$ only.

The reduced ES-BGK is obtained by multiplying our kinetic model~\eqref{eq: cinetique}-\eqref{eq: operator}
by the vector $(1,\varepsilon,iRT_0)^T$ and by summing and integrating
w.r.t to $i$ and $\varepsilon$, respectively. We get:
\begin{equation}
\label{eq: r_model}
\partial_t \F +v\cdot \nabla \F = \frac{1}{\tau}(\mathcal{G}[\F]-\F),
\end{equation}
where $\F = (F,G,H)$ and $ \mathcal{G}[\F] = (\mathcal{G}_{tr}[f],e_{rot}^{rel}\mathcal{G}_{tr}[f],e_{vib}^{rel}\mathcal{G}_{tr}[f])$.

\subsection{Reduced entropy}

In this section, we again use the change of variable
$\varepsilon=I^{2/\delta}$. To prove the H-theorem for our reduced model, it is convenient to
view it as an entropic moment closure (w.r.t variables $I$ and $i$),
see for instance~\cite{Levermore,perthame_1,mieussens99}. Then we
define $g_{\F}$ such that $\ent(g_{\F})$ is the minimum of
$\ent$ on the set $\chi_{\F} = \lbrace \phi\geq 0 \text{ such that}  \la
(1,I^{\frac{2}{\delta}},iRT_0)\phi\ra_{I,i} = \F  \rbrace$, and we set
$\entred(\F) = \ent(g_{\F})$. It is now possible to prove that
$\entred(\F)$ is an entropy for our reduced system.

\begin{proposition}[Reduced entropy]
An explicit form of $\entred$ is given by $\entred(\F) = \la h(\F) \ra_v$, where
$h$ is the strictly convex function defined by
\begin{equation}\label{eq-hFGH} 
\begin{aligned}  
h(\F) =& F\left[
  \left(1+\frac{\delta}{2}\right)\left(\log\left(\frac{F}{G^{\frac{\delta}{2+\delta}}}\right)-1\right)+\log\left(\frac{RT_0F}{RT_0F+H}\right)+\frac{\delta}{2}\log\frac{\delta}{2}+\log
  \Lambda(\delta) \right] \\
& +\frac{H}{RT_0}\log\left(\frac{H}{RT_0F+H}\right).
\end{aligned}
\end{equation}
\end{proposition}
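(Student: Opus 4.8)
The plan is to treat $\entred$ as the value function of an entropic moment closure in the kinetic variables $(I,i)$, with $v$ playing the role of a parameter, and to compute it explicitly. For fixed $(t,x,v)$, $g_{\F}$ minimizes the strictly convex functional $\phi\mapsto\la\phi\log\phi-\phi\ra_{I,i}$ over the affine set $\chi_{\F}$, i.e.\ under the three linear constraints $\la(1,I^{2/\delta},iRT_0)\,\phi\ra_{I,i}=(F,G,H)$. Whenever these moments are realizable (in particular $F,G,H>0$), the standard Lagrange-multiplier argument for entropy minimization (see~\cite{Levermore,perthame_1,mieussens99}) shows that the minimizer has the exponential form
\begin{equation*}
g_{\F}(v,I,i)=\exp\big(\lambda_0+\lambda_1 I^{2/\delta}+\lambda_2 iRT_0\big)
=C\,\exp\Big(-\frac{I^{2/\delta}}{RT_{rot}}\Big)\exp\Big(-\frac{iT_0}{T_{vib}}\Big),
\end{equation*}
that is, the product of a ``rotational Gaussian'' at some temperature $T_{rot}$ and a geometric vibrational law at some temperature $T_{vib}$, the parameters $C$, $T_{rot}$, $T_{vib}$ depending on $v$ only through $(F,G,H)(t,x,v)$.

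The first step is to invert the moment constraints for $\lambda_0,\lambda_1,\lambda_2$. Computing the three moments of $g_{\F}$ with the same elementary Gaussian integrals and geometric series already used for Proposition~\ref{prop:cons} (see appendix~\ref{app:integrals}) gives $G/F=\tfrac{\delta}{2}RT_{rot}$ and $H/F=RT_0/(e^{T_0/T_{vib}}-1)$, hence
\begin{equation*}
RT_{rot}=\frac{2G}{\delta F},\qquad e^{-T_0/T_{vib}}=\frac{H}{RT_0 F+H},\qquad 1-e^{-T_0/T_{vib}}=\frac{RT_0F}{RT_0F+H},
\end{equation*}
and then the normalization $F=\la g_{\F}\ra_{I,i}$ determines $C$ as an explicit product of powers of $F$, $G$, and $RT_0F+H$.

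The second step is to evaluate $\entred(\F)=\ent(g_{\F})=\la g_{\F}\log g_{\F}-g_{\F}\ra_{v,I,i}$. Since $\log g_{\F}=\lambda_0+\lambda_1 I^{2/\delta}+\lambda_2 iRT_0$ is an affine combination of the constrained moments, the constraints give the pointwise-in-$v$ identity
\begin{equation*}
\la g_{\F}\log g_{\F}-g_{\F}\ra_{I,i}=\lambda_0 F+\lambda_1 G+\lambda_2 H-F
=F\log C-\frac{\delta}{2}F-\frac{H}{RT_0}\log\frac{RT_0F+H}{H}-F .
\end{equation*}
Substituting the expressions of $C$, $T_{rot}$, $T_{vib}$ from the first step, regrouping the $\log F$ and $\log G$ contributions into $(1+\tfrac{\delta}{2})\log\big(F/G^{\delta/(2+\delta)}\big)$ by means of the identity $(1+\tfrac{\delta}{2})\tfrac{\delta}{2+\delta}=\tfrac{\delta}{2}$, and collecting the remaining constants, one obtains $\entred(\F)=\la h(\F)\ra_v$ with $h$ as in~\eqref{eq-hFGH}. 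The bookkeeping of the additive constants (the $\tfrac{\delta}{2}\log\tfrac{\delta}{2}$ and $\log\Lambda(\delta)$ terms, where $\Lambda(\delta)$ is fixed by the normalization of the $I$-integral) is the only slightly delicate point of this otherwise routine computation.

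Finally, strict convexity of $h$ follows at once from its variational definition: $h(F,G,H)=\min\{\la\phi\log\phi-\phi\ra_{I,i}:\phi\ge 0,\ \la(1,I^{2/\delta},iRT_0)\phi\ra_{I,i}=(F,G,H)\}$ is obtained by minimizing the strictly convex functional $\phi\mapsto\la\phi\log\phi-\phi\ra_{I,i}$ under constraints depending linearly on $(F,G,H)$; since the weights $1$, $I^{2/\delta}$, $iRT_0$ are linearly independent, uniqueness of the minimizer together with convexity of the constraint map forces $h$ to be strictly convex on $\{F,G,H>0\}$. Alternatively, one may check directly that the $3\times 3$ Hessian of the explicit expression~\eqref{eq-hFGH} is positive definite.
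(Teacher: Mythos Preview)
Your proposal is correct and follows essentially the same route as the paper: identify the minimizer $g_{\F}$ via Lagrange multipliers as an exponential in $(1,I^{2/\delta},iRT_0)$, invert the three moment constraints to express the multipliers in terms of $(F,G,H)$, and then evaluate $\ent(g_{\F})$ using that $\log g_{\F}$ is a linear combination of the constrained moments. The only noticeable difference is in the justification of strict convexity: the paper does not prove it inside the proof but relegates it to a remark invoking the Legendre-transform structure of $h$, whereas you argue directly from the variational characterization (the minimizer is unique and the constraints are linear in $(F,G,H)$, so convex combinations of minimizers are admissible for convex combinations of data); both arguments are standard and essentially equivalent.
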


\begin{proof}
  First, we compute $g_{\F}$ by solving the minimization problem
  $\ent(g_{\F}) = \min_{\chi_{\F}} \ent$. Since $\chi_{\F}$ is convex, we use a Lagrange multiplier method to find the minimum of the functional $\mathcal{L}$ defined as follows:
\begin{equation*}
\mathcal{L}(\phi,\alpha,\beta,\gamma)=\la \phi\log \phi-\phi \ra_{I,i} +\alpha \left(\la \phi \ra_{I,i}-F\right)+\beta \left(\la I^{2/\delta} \phi \ra_{I,i}-G\right)+\gamma \left(\la iRT_0 \phi \ra_{I,i}-H\right),
\end{equation*}
where the Lagrange multipliers $\alpha$, $\beta$ and $\gamma$ are functions of $v$. The minimum satisfies $\displaystyle \frac{\partial
  \mathcal{L}}{\partial \phi}(g_{\F},\alpha,\beta,\gamma)=0$, which leads to
\begin{equation}
\label{eq: r_g}
g_{\F}=\exp(-\alpha-\beta I^{2/\delta}-\gamma iRT_0 ).
\end{equation}
With the linear constraints $\la(1,I^{2/\delta},RT_0i)g_{\F} \ra_{I,i} =
(F,G,H)$, we find explicit values for $\alpha$, $\beta$, and $\gamma$
as functions of
$F$, $G$, and $H$.
Consequently, by using $\entred(\F) = \ent(g_{\F})$ and these
values of $\alpha$, $\beta$, and $\gamma$, we find~(\ref{eq-hFGH}).
\end{proof}

\begin{remark}
The convexity property of $h$ could also be proved  without any
explicit computation: indeed, it can be viewed as the Legendre transform of
$h^*(\alpha, \beta, \gamma) = \la \exp(-\alpha - \beta I^{2/\delta}- \gamma iRT_0)\ra_v$
(where $\alpha$, $\beta$, and $\gamma$ are such that $\F = \la(1,I^{2/\delta},iRT_0)
\exp(-\alpha - \beta I^{2/\delta}- \gamma iRT_0)\ra_{I,i}$), which is clearly strictly convex (see details for a similar argument in~\cite{Levermore}).
\end{remark}

\begin{proposition}[H-theorem]
  The reduced ES-BGK system~\eqref{eq: r_model} satisfies the following local entropy
  dissipation law
  \begin{equation}\label{eq-Htheo-reduced} 
    \partial_t \entred(\F) + \nabla \cdot \la v h(\F)\ra_v =
    \la D_{\F}h(\F)(    \frac{1}{\tau}\mathcal{G}[\F] - \F)
    \ra_v
    \leq 0,
  \end{equation}
and the equilibrium is
reached (the right-hand side of~\eqref{eq: r_model} is zero) if, and
only if,
\begin{equation*}
\F = (\mathcal{M}_{tr}[f],
e_{rot}(T_{eq})\mathcal{M}_{tr}[f], e_{vib}(T_{eq})\mathcal{M}_{tr}[f]),  
  \end{equation*}
where $\mathcal{M}_{\mo}[f]$ is the Maxwellian for translation d.o.f
(see section~\ref{subsec:construct}).
  \end{proposition}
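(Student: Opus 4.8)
The plan is to mirror the structure of the H-theorem proof for the full model (Proposition~\ref{prop:theoH}), exploiting the fact that $\entred(\F)$ is, by construction, a projection of $\ent(g)$ onto the moments $(F,G,H)$. First I would observe that $\F = \la (1,I^{2/\delta},iRT_0) g \ra_{I,i}$ for the solution $g$ of the full model~\eqref{eq-ESBGKI}, so that multiplying~\eqref{eq-ESBGKI} by $(1,I^{2/\delta},iRT_0)^T$, summing over $i$ and integrating over $I$, recovers exactly~\eqref{eq: r_model}; hence any entropy estimate for the reduced system should descend from the one for $g$. The local dissipation identity in~\eqref{eq-Htheo-reduced} follows from elementary calculus applied to~\eqref{eq: r_model}: since $\entred(\F) = \la h(\F)\ra_v$ with $h$ convex, one has $\partial_t h(\F) = D_\F h(\F)\cdot\partial_t\F$ and $\nabla\cdot\la v h(\F)\ra_v = \la v\cdot\nabla h(\F)\ra_v = \la D_\F h(\F)\cdot(v\cdot\nabla\F)\ra_v$, and adding these and using the transport-relaxation equation~\eqref{eq: r_model} gives the stated right-hand side $\la D_\F h(\F)\cdot\frac1\tau(\mathcal{G}[\F]-\F)\ra_v$.

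The core of the argument is the sign of this right-hand side. Here I would use the variational characterization of $\entred$: for each fixed $v$, the vector $D_\F h(\F)$ equals the triple of Lagrange multipliers $(\alpha,\beta,\gamma)$ appearing in~\eqref{eq: r_g}, i.e. $g_\F = \exp(-D_\F h(\F)\cdot(1,I^{2/\delta},iRT_0))$. Therefore, by convexity of $\ent$ on the moment space (equivalently, convexity of $h$), one has the pointwise-in-$v$ inequality
\begin{equation*}
D_\F h(\F)\cdot(\mathcal{G}[\F]-\F) \leq h(\mathcal{G}[\F]) - h(\F),
\end{equation*}
exactly as in Step~1 of the proof of Property~3. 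Integrating in $v$ and dividing by $\tau>0$, it then suffices to prove $\la h(\mathcal{G}[\F])\ra_v \leq \la h(\F)\ra_v$, that is $\entred(\mathcal{G}[\F]) \leq \entred(\F)$. Now $\mathcal{G}[\F] = (\mathcal{G}_{tr}[f], e_{rot}^{rel}\mathcal{G}_{tr}[f], e_{vib}^{rel}\mathcal{G}_{tr}[f])$, so $g_{\mathcal{G}[\F]}$ is precisely the Gaussian $\mathcal{G}[g]$ of~\eqref{eq-GI} (its $(1,I^{2/\delta},iRT_0)$-moments are $(\rho,\rho e_{rot}^{rel},\rho e_{vib}^{rel})$, matching the constraints defining $g_\F$ for that $\F$), and hence $\entred(\mathcal{G}[\F]) = \ent(\mathcal{G}[g])$; similarly $\entred(\F) = \ent(g_\F) \leq \ent(g)$ since $g$ lies in the constraint set $\chi_\F$. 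So the desired inequality reduces to $\ent(\mathcal{G}[g]) \leq \ent(g)$, which is exactly~\eqref{eq-minGg} — already established in the proof of Property~3. This gives~\eqref{eq-Htheo-reduced}.

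For the equilibrium characterization, the right-hand side of~\eqref{eq: r_model} vanishes iff $\F = \mathcal{G}[\F]$, i.e. iff $g_\F = g_{\mathcal{G}[\F]} = \mathcal{G}[g_\F]$; by Property~4 of Proposition~\ref{prop:theoH} this forces $g_\F = \mathcal{M}[g_\F]$, which upon taking the $(1,I^{2/\delta},iRT_0)$-moments yields $\F = (\mathcal{M}_{tr}[f], e_{rot}(T_{eq})\mathcal{M}_{tr}[f], e_{vib}(T_{eq})\mathcal{M}_{tr}[f])$, since $E_{rot}(\mathcal{M}[f]) = e_{rot}(T_{eq})$ and $E_{vib}(\mathcal{M}[f]) = e_{vib}(T_{eq})$. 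The only point requiring a little care — and the main (mild) obstacle — is the bookkeeping identifying $g_{\mathcal{G}[\F]}$ with $\mathcal{G}[g]$: one must check that the entropic closure of the triple $\mathcal{G}[\F]$ in the $(I,i)$ variables reproduces exactly the rotational and vibrational factors $\mathcal{G}_{rot}[g]\,\mathcal{G}_{vib}[g]$ of~\eqref{eq-GI}, which follows because those factors are themselves the unique entropy minimizers under the constraints $\la I^{2/\delta}\cdot\ra = e_{rot}^{rel}$, $\la iRT_0\cdot\ra = e_{vib}^{rel}$ (a one-dimensional version of Property~2). Once this identification is made, everything else is a direct transcription of the already-proved full-model results, and no new estimate is needed.
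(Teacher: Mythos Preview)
Your overall strategy is right and very close to the paper's, but the chain of inequalities you assemble does not close. You write $\entred(\mathcal{G}[\F]) = \ent(\mathcal{G}[g])$ and $\entred(\F) = \ent(g_{\F}) \leq \ent(g)$, and then claim the desired inequality $\entred(\mathcal{G}[\F]) \leq \entred(\F)$ ``reduces to'' $\ent(\mathcal{G}[g]) \leq \ent(g)$. It does not: the bound $\entred(\F) \leq \ent(g)$ points the wrong way. From $\ent(\mathcal{G}[g]) \leq \ent(g)$ you only obtain $\entred(\mathcal{G}[\F]) \leq \ent(g)$, and since $\ent(g) \geq \entred(\F)$, nothing follows about $\entred(\mathcal{G}[\F])$ versus $\entred(\F)$. (There is also a conceptual issue with invoking ``the solution $g$ of the full model'': the reduced system is meant to stand alone, and a given $\F$ need not arise as marginals of any full-model solution.)

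The fix---and this is exactly what the paper does---is to discard the full-model $g$ and work with $g_{\F}$ itself. Since $g_{\F}$ has the same moments $(\rho,u,\Theta,E_{rot},E_{vib})$ as any distribution with marginals $\F$, one has $\mathcal{G}[g_{\F}] = \mathcal{G}[g]$, and crucially $\entred(\F) = \ent(g_{\F})$ with \emph{equality}. The paper then uses the chain $\ent(\mathcal{G}[g]) = S(\rho,u,\Pi,T_{rot}^{rel},T_{vib}^{rel}) \leq S(\rho,u,\Theta,T_{rot},T_{vib}) \leq \ent(g_{\F})$, the last inequality holding because $g_{\F}$ lies in the constraint set defining $S(\rho,u,\Theta,T_{rot},T_{vib})$. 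Equivalently, you can simply apply~\eqref{eq-minGg} to $g_{\F}$ rather than to $g$ (its proof uses only moments, so it holds for any nonnegative distribution): this yields $\entred(\mathcal{G}[\F]) = \ent(\mathcal{G}[g_{\F}]) \leq \ent(g_{\F}) = \entred(\F)$ directly. Your identification $g_{\mathcal{G}[\F]} = \mathcal{G}[g]$ and your equilibrium argument are fine.
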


  \begin{proof}
The equality in~\eqref{eq-Htheo-reduced} is obtained with elementary
calculus. Since $h$ is convex, the right-hand side of this equality
satisfies
\begin{equation*}
  \la D_{\F}h(\F)( \frac{1}{\tau}\mathcal{G}[\F] - \F)\ra_v
  \leq  \la  h(\mathcal{G}[\F])- h(\F)\ra_v 
     =  \entred(\mathcal{G}[\F])- \entred(\F)
  \end{equation*}
Therefore, the H-theorem is proved if we can prove that this entropy
difference is non-negative.

First, we prove that $ \entred(\mathcal{G}[\F]) \leq 
    \ent({\cal G}[g])$. Indeed, ${\cal G}[g]$ is clearly in
    $\chi_{\mathcal{G}[\F]}$,
      and since $\entred(\mathcal{G}[\F])$ is the
    minimum value of $\ent$ on this set, we have $\entred(\mathcal{G}[\F]) \leq
    \ent({\cal G}[g])$. It is easy to prove that we have in fact
    equality, but this is not necessary here.

Now it is sufficient to prove that $\ent(({\cal G}[g])) \leq
\entred(\F)$. First, remind that in the proof of
Proposition~\ref{prop:theoH} (step 2), we have obtained
\begin{equation*}
    \ent(({\cal G}[g]))   =
    S(\rho,u,\Pi,T_{rot}^{rel},T_{vib}^{rel})
    \leq S(\rho,u,\Theta,T_{rot},T_{vib}).
\end{equation*}
Then we remind that $\entred(\F) =
\ent(g_{\F})$, where $g_{\F}$ is in $\chi_{\F}$. Consequently,
$g_{\F}$ has the same moments as $g$, and hence $S(\rho,u,\Theta,T_{rot},T_{vib})\leq
\ent(g_{\F}) = \entred(\F)$, which concludes the proof. 
\end{proof}

  \begin{remark}
The reduced entropy can be simplified by dropping out some terms that
are proportional to $F$: if we set
\begin{equation*}
  \begin{aligned}
\tilde{\entred}(\F)=F\log\left(\frac{F}{G^{\frac{\delta}{2+\delta}}}\right)-F+F\log\left(\frac{RT_0F}{RT_0F+H}\right)+\frac{H}{RT_0}\log\left(\frac{H}{RT_0F+H}\right),
\end{aligned}
\end{equation*}
then $\tilde{\entred}$ is also strictly convex. The previous
proof also leads to an entropy production term lower than $\tilde{\entred}(\mathcal{G}[\F])-
    \tilde{\entred}(\F).$ This entropy difference is the same as
    that obtained with the original reduced entropy $\entred$ up to
    an integral of $\mathcal{G}[\F] - \F$ which is zero (mass
    conservation). This simplified reduced entropy is similar to that
    of~\cite{esbgk_poly,mieussens99} with, in addition, the effects of vibrations (\cite{Mathiaud2019}).
\end{remark}

\section{\label{sec: numeric}Numerical test}

 In this section, we study the relaxation process to equilibrium in a
 space homogeneous polyatomic vibrating gas by using Monte Carlo
 simulations of the ES-BGK model presented in section \ref{sec:
   esbgk}. Our results will be used to confirm that the relaxation
 rates of translational, rotational, and vibrational degrees of
 freedom can indeed be obtained by adjusting the parameters $\theta$
 and $\Gamma$. Moreover, we will also check that the correct Prandtl number
 can be obtained by adjusting the parameter $\nu$.

In this space homogeneous case, the ES-BGK model reads 
\begin{equation}
\label{eq: ec_homogene}
\partial_t f=\frac{1}{\tau}(\mathcal{G}[f]-f).
\end{equation}
Note that by conservation property~\ref{prop:cons}, the mass density,
velocity, and equilibrium temperature, are constant in time here.

\subsection{The Monte Carlo method}
To observe the process of relaxation we enforce a non-equilibrium
initial condition, for instance a gap between the mean of the
velocities of the particles and the velocity of the gas: the model
should relax velocities and internal energies towards equilibrium
state. We use a large number $N$ of numerical particles related to
the real molecules by a distribution function associated to a
constant numerical weight $\omega=1/N$. We use an explicit Euler
scheme for time discretization and get:
\begin{equation}
\label{eq: ec_homogene2}
f^{n+1}=\left(1-\frac{\Delta t}{\tau}\right)f^n+\frac{\Delta t}{\tau}\mathcal{G}[f^n],
\end{equation}
with $\Delta t=t^{n+1}-t^n$ and we consider $\Delta t/\tau \leq 0.1$
to ensure stability~\cite{LBP}.  Equation~(\ref{eq: ec_homogene2})
models the effects of collisions on the distribution functions of
velocities and energies: at time $t^{n+1}$ the distribution function
is a convex combination of the distribution function at time $t^n$
and its corresponding local Gaussian distribution. This can be simulated
with a Monte Carlo algorithm as follows: at each time step, for each
particle, we decide if its velocity has to be modified by a collision
(with a probability $\Delta t/\tau$). In such case, the components of
its velocity $v$ are modified by 
\begin{eqnarray}
v_k=u_k+A   (B_1,B_2,B_3)^T,
e_{rot}^k=B_4, e_{rot}^k=B_5,
\end{eqnarray}
where $u$ is macroscopic velocity of the gas, $B_1$, $B_2$, $B_3$ are
three random numbers generated from a standard normal law and the
matrix $A$ needs to satisfy the condition: $\Pi=AA^T$ (generally, $A$
is given by the Cholesky decomposition due to its simplicity and its
low computational cost). $B_4$ is generated through an exponential distribution depending on $RT_{rot}^{rel}$  and $B_5$ through a Poisson distribution of parameter $RT_{vib}^{rel}$.

\subsection{Numerical results}

We consider $N=10^7$ numerical particles of velocities initially
distributed according to a Gaussian distribution of variance $500$ and
of mean $0$ for the second and the third components and $50$ for the
first. The initial rotational energy is set to $1000\,r_1$ and the
initial vibrational energy is set to $10\,r_2$ where the random
numbers $r_1$ and $r_2$ follow an uniform law between $0$ and $1$. The
parameters $\theta$ and $\Gamma$ are defined
by~\eqref{eq-defGammatheta}, so that collision numbers $Z_{rot}$ and
$Z_{vib}$ are respectively equal to $5$ and $20$. Finally, we set
$\nu$ according to~\eqref{eq-Pr} so that the Prandtl number is equal to $0.73$, which is close to the tabulated value
for air at $2000K$.
These non-equilibrium initial conditions create energy exchanges between
modes and a heat flux.

We first show in figure~\ref{fig: temp} that the temperature relaxes as
expected (see section~\ref{sec:relax}). First, the translational directional
temperatures converge to the mean translational temperature $T_{tr}$
at time $\tau$. Then, at time $20\tau$, this temperature and the rotational temperature converge towards the
translational-rotational temperature $T_{tr,rot}$. Finally, at time
$100 \tau$, $T_{tr,rot}$ and the vibrational temperature $T_{vib}$
converge to the equilibrium temperature $T_{eq}$.

In figure~\ref{fig: equil}, we show the distribution of velocities,
rotational energy, and vibrational energy, obtained at steady
state. This distributions are compared to  the
components of the Maxwellian distribution~\eqref{eq-Mf}, and we
observe a prefect agreement between them, which proves
that the correct equilibrium is captured by the model.


Now we plot in figure~\ref{fig:diff_temp} the temperature differences
$T_{tr}-T_{tr,rot}$, $T_{rot}-T_{tr,rot}$,  $T_{tr,rot}-T_{eq}$, and
$T_{vib}-T_{eq}$. We observe that this functions converge
exponentially, as expected (even if a numerical noise is observed for
$t>20\tau$ which corresponds to machine accuracy when the
translational and the rotational temperatures are
converged). Moreover, according to~\eqref{eq:diff_temp}, the slopes of
these convergence curves can be used to compute $Z_{rot}$ and $Z_{vib}$,
a posteriori.  We find $Z_{rot}=4.878$ and $Z_{vib}=19.61$, which is
very close to the expected values.

Finally, we plot in figure~\ref{fig: Pr} the evolution of the difference of
the first directional temperature $T_{11}$ and the mean translational
temperature $T_{tr}$, as well as the evolution of the first component
of the heat flux $q_1$. According to equation~\eqref{eq-ratioqT}, it is possible to estimate
the Prandtl number by evaluating the slopes of the of these
quantities: we find $0.71$, which is close to the input value $0.75$.

\section{Conclusion}

In this paper, we have proposed an extension of the original
polyatomic ES-BGK model to take into account discrete levels of
vibrational energy. For a gas flow in non-equilibrium, for instance
for a high enthalpy flow, we expect this model to capture the
shock position and the parietal heat flux with more accuracy. This model satisfies the
conservation properties and the H-theorem and allows to adjust correct
transport coefficients and relaxation rates. It has been illustrated by
numerical simulations for an homogeneous problem.  Finally, a reduced
model which also satisfies the conservation laws and the H-theorem has
been obtained: with this model, it should be possible to make
simulations at a computational cost which is of same order of
magnitude as for a monoatomic gas.

\appendix

\section{Inequality for $\det(\Theta)/det(\Pi)$}
\label{subsec:ineqthetapi}

Here we prove the result  for inequality (\ref{ineqthetapi}) which is:
$\frac{\det \Theta}{\det\Pi}\leq
\left(\frac{E_{tr}(f)}{e_{tr}^{rel}}\right)^3$. We establish the
result in a basis where $\Theta$ can be diagonalized and we note
$\Theta_1,\Theta_2,\Theta_3$ its eigenvalues. Note that $\Pi$ is
diagonal in the same basis. Then we have
\begin{equation*}
\frac{\det \Theta}{\det\Pi} =  \frac{\prod_{i=1}^3\Theta_i}{\prod_{i=1}^3(\Gamma RT_{eq}+(1-\Gamma)(\theta(RT_{tr,rot})+(1-\theta)(\nu \Theta_i+(1-\nu)RT_{tr})}.
  \end{equation*}
  The proof is based on convexity arguments. However, since
  parameter $\nu$ can be negative (we remind that $\nu$ lies in
  $[-\frac{1}{2},1]$), we first want to obtain an lower bound for
  ${\det\Pi}$ that does not depend on $\nu$.

First, we consider $\det \Pi$ as a function of $\nu$,
and we take its logarithm denoted by $\phi(\nu)$: 
\begin{equation*}
  \phi(\nu) =  \sum_{i=1}^3 \log(\Gamma RT_{eq}+(1-\Gamma)(\theta(RT_{tr,rot})+(1-\theta)(\nu \Theta_i+(1-\nu)RT_{tr}).
\end{equation*}
By computing their second derivatives, it can easily be seen that each component of this sum is a concave
function of $\nu$, and so is the function $\phi$. Moreover, a simple
derivation and relation $\sum_{i=1}^3 \Theta_i=3RT_{tr}$
(see section~\ref{subsec:int_temp}) show that $\phi'(0)=0$. These two
properties imply that $\phi$
necessarily reaches its minimum on $[-\frac{1}{2},1]$ at $\nu=-\frac{1}{2}$ or at $\nu=1$.

Now we have to determine what is the minimum between $\phi(-\frac{1}{2})$ and
$\phi(1)$.  In order to simplify the notations, we introduce $X =
\Gamma RT_{eq}+(1-\Gamma)\theta RT_{tr,rot}$, which is positive, and
$Y = (1-\Gamma)(1-\theta)$, which is in $[0,1[$. Then we find
\begin{equation*}
  \phi(-\frac{1}{2}) = \log( \prod_{i=1}^3 (X + Y  \frac{\Theta_j+\Theta_k}{2}) )
\quad \text{ and } \quad 
\phi(1) = \log( \prod_{i=1}^3 (X + Y \Theta_i)),
\end{equation*}
where $j$ and $k$ in the first expression denote the two other indices
different from $i$. A convex inequality (which is nothing but the usual inequality between arithmetic and
geometric means) implies 
\begin{equation*}
    \phi(-\frac{1}{2}) \geq \log (\prod_{i=1}^3 (\sqrt{(X + Y
      \Theta_j} \sqrt{(X + Y  \Theta_k})) = \log( \prod_{i=1}^3 (X + Y
    \Theta_i)) = \phi(1).
\end{equation*}
Consequently, $\phi(\nu)\geq \phi(1)$ for every $\nu$ in
$[-\frac{1}{2},1]$: this implies $\det\Pi \geq \prod_{i=1}^3 (X + Y
    \Theta_i)$ and we deduce this upper bound
\begin{equation}\label{eq-upperbound} 
  \frac{\det \Theta}{\det\Pi} \leq \prod_{i=1}^3 \frac{\Theta_i}{X + Y
    \Theta_i},
\end{equation}
that does not depend on $\nu$ anymore, as announced above.

In the last part, we analyze the logarithm of the right-hand side of
the previous inequality: we denote by 
\begin{equation*}
  g(\Theta) = \log \prod_{i=1}^3 \frac{\Theta_i}{X + Y \Theta_i} =
  \sum_{i=1}^3 f(\Theta_i),
\end{equation*}
where $ f(s) = \log \left(   \frac{s}{X + Y s} \right)$ is clearly a concave function. Then we use
the Jensen inequality to get
\begin{equation*}
\begin{split}
   \frac{1}{3}g(\Theta) & =  \frac{1}{3} \sum_{i=1}^3 f(\Theta_i)  \leq f \left( \frac{1}{3} \sum_{i=1}^3 \Theta_i  \right)  \\
& = f(RT_{tr}) =  \log \left(   \frac{RT_{tr}}{X + Y RT_{tr}} \right).
\end{split}
\end{equation*}
Now we note that $X + Y RT_{tr} = RT_{tr}^{rel}$
(see the definition of $X$ and $Y$ above and the definition~(\ref{eq-Ttrrel})) of $T_{tr}^{rel}$, so that $g(\Theta)\leq \log (
(\frac{RT_{tr}}{RT_{tr}^{rel}} )^3)$. Finally, we use this estimate in~(\ref{eq-upperbound}) to find 
\begin{equation*}
  \frac{\det \Theta}{\det\Pi} \leq \left(\frac{RT_{tr}}{RT_{tr}^{rel}} \right)^3,
\end{equation*}
and this gives the result, since we remind that $E_{tr}(f) =
\frac{3}{2}RT_{tr}$ and $e_{tr}^{rel} = \frac{3}{2}RT_{tr}^{rel}$.

\section{First and second order partial derivatives of $\mathcal{S}$}
\label{sec:appS}
We remind that
\begin{equation*}
\mathcal{S}(e_1,e_2,e_3)=R\left(\frac32\log(e_1)+\frac{\delta}2\log(e_2)+\log\left(1+\frac{e_3}{RT_0}\right)+\frac{e_3}{RT_0}\log\left(1+\frac{RT_0}{e_3}\right)\right).
\end{equation*}
The first order derivatives of $S$ are
\begin{equation*}
  \partial_{1}\mathcal{S} = \frac{3}{2}R\frac{1}{e_1}, \qquad 
  \partial_{2}\mathcal{S} = \frac{\delta}{2}R\frac{1}{e_2}, \qquad 
  \partial_{3}\mathcal{S} = \frac{1}{T_0}\log\left(1+ \frac{RT_0}{e_3}\right).
\end{equation*}
At $(e_1,e_2,e_3) = (e_{tr}^{rel},e_{rot}^{rel},e_{vib}^{rel})$, with
the corresponding definitions~\eqref{eq-fermeture2} and~\eqref{eq-Ttrrel} of the relaxation
temperatures, the relations above give
\begin{equation*}
  \partial_{1}\mathcal{S} = \frac{1}{T_{tr}^{rel}}, \qquad 
  \partial_{2}\mathcal{S} = \frac{1}{T_{rot}^{rel}}, \qquad 
  \partial_{3}\mathcal{S} = \frac{1}{T_{vib}^{rel}},
\end{equation*}
while the second order derivatives are
\begin{equation*}
\partial_{1,1}\mathcal{S}=-\frac{3R}{2(e_{tr}^{rel})2},\qquad
\partial_{2,2}\mathcal{S}=-\frac{\delta R}{2(e_{rot}^{rel})^2},\qquad
\partial_{3,3}\mathcal{S}=-\frac{R}{e_{vib}^{rel}\left(RT_0+{e_{vib}^{rel}}\right)},
\end{equation*}
and are clearly negative, while the cross derivatives are zero.

\section{Second order expansion of $\Sigma(f)$ and $q(f)$}
\label{app:Sigmaq}

Since $\mathcal{M}[f]= \mathcal{M}_{\mo}[f]\mathcal{M}_{rot}[f]\mathcal{M}_{vib}[f]$,
the expansion of $\partial_t \mathcal{M}[f]+v\cdot \nabla
\mathcal{M}[f]$ requires the expansion of the transport operator
applied to each component of $\mathcal{M}[f]$. We only detail here how we
proceed for the translation component $\mathcal{M}_{\mo}[f]$. The chain rule gives
\begin{equation*}
  \begin{split}
    \partial_t \mathcal{M}_{\mo}[f]+v\cdot \nabla \mathcal{M}_{\mo}[f]
 = & \left[\frac{\partial_t \rho + v\cdot\nabla
     \rho}{\rho}
 + (\partial_t u + (v\cdot\nabla) u ) \cdot \frac{v-u}{RT_{eq}}\right.\\
& \quad \left. + (\partial_t T_{eq} + v\cdot\nabla T_{eq} ) \left(
    \frac{|v-u|^2}{2RT_{eq}}-\frac{3}{2}\right)\frac{1}{T_{eq}}\right]
\mathcal{M}_{\mo}[f].
  \end{split}
  \end{equation*}
Euler equations~\eqref{eq-euler_nc} are used to replace time
derivatives of $\rho$, $u$, and $T_{eq}$ by their space derivatives,
and finally, we use the change of variables
$V=\frac{v-u}{\sqrt{RT_{eq}}}$ to get
\begin{equation*}
\partial_t M_{\mo}[f]+v\cdot \nabla M_{\mo}[f]=\frac{\rho}{(RT_{eq})^{3/2}}M_0(V)\left( A(V)\cdot\frac{\nabla \theta}{\sqrt{\theta}}+B(V):\nabla u  \right)+O(\Kn),
\end{equation*}
with
\begin{equation*}
A(V)=\left( \frac{|V|^2}{2}-\frac{5}{2} \right)V, \qquad \text{ and }\qquad
B(V)=V \otimes V-\left( \left( \frac{|V|^2}{2}-\frac{3}{2} \right)C+1 \right)I  .
\end{equation*}

The same kind of algebra is also used for the components
$\mathcal{M}_{rot}[f]$ and $\mathcal{M}_{vib}[f]$. They are much simpler
and are left to
the reader.

\section{Gaussian integrals and other summation formulas}
\label{app:integrals}
In this section, we give some summation and integrals formula that are
used in the paper. First, we have $\sum_{i=0}^{+\infty} e^{-i\theta} =
\frac{1}{1-e^{-\theta}}$ and $\sum_{i=0}^{+\infty} ie^{-i\theta} =
\frac{e^{-\theta}}{(1-e^{-\theta})^2}$, which can be used to obtain
\begin{equation*}
 \sum_{i=0}^{+\infty}   {\cal M}_{vib}(i) [f] = 1, 
  \qquad \text{ and }   \qquad  \sum_{i=0}^{+\infty}   iR T_0{\cal M}_{vib}(i) [f] = \frac{\delta_v(T_{eq})}{2}RT_{eq}.
  \end{equation*}

Then, we remind the gamma function ${\bf \Gamma}(x) =
\int_{0}^{+\infty}s^{x-1}e^{-s}\, ds$, which is such that ${\bf \Gamma}(x+1) = x{\bf \Gamma}(x)$ and ${\bf
  \Gamma}(1) = 1$. This is used to get
\begin{equation*}
 \int_0^{+\infty} {\cal M}_{rot}[f](\varepsilon) \, d\varepsilon = 1   \qquad \text{ and }   \qquad
 \int_0^{+\infty} \varepsilon {\cal M}_{rot}[f](\varepsilon) \, d\varepsilon = \frac{\delta}{2}RT_{eq}.
 \end{equation*}

Finally, we remind the definition of the absolute Maxwellian $M_0(V) =
\frac{1}{(2\pi)^{\frac32}}\exp(-\frac{|V|^2}{2})$. We denote by
$\cint{\phi}_V = \int_{\R^3}\phi(V)\, dV$ for any function $\phi$. It is standard to
derive the following integral relations (see~\cite{chapmancowling},
for instance), written with the Einstein notation:
\begin{align*}
&   \cint{M_0}_V = 1, \\
&   \cint{V_iV_jM_0}_V = \delta_{ij}, \qquad \cint{V_i^2M_0}_V = 1,
  \qquad \cint{|V|^2M_0}_V = 3, \\
& \cint{V_iV_jV_kV_lM_0}_V = \delta_{ij}\delta_{kl}  +
  \delta_{ik}\delta_{jl}  + \delta_{il}\delta_{jk} , \qquad \cint{V_i^2V_j^2M_0}_V = 1 + 2\, \delta_{ij} \\
& \cint{V_iV_j|V|^2M_0}_V = 5 \,\delta_{ij},  \qquad \cint{|V|^4M_0}_V
  = 15, \\
& \cint{V_iV_j|V|^4M_0}_V = 35 \,\delta_{ij},  \qquad \cint{|V|^6M_0} = 105,
\end{align*}
while all the integrals of odd power of $V$ are zero. Note that the first relation of each line implies the other relations of the same line: these relations are given here to improve the readability of the paper.
From the previous Gaussian integrals, it can be shown that for any
$3\times 3$ matrix $C$, we have
\begin{equation*}
\cint{V_iV_jC_{kl}V_kV_lM_0}_V = C_{ij} + C_{ji} + C_{ii}\delta_{ij}.
\end{equation*}


\bibliographystyle{unsrt}
\bibliography{biblio}

\newpage
\begin{figure}[htbp]
\centering 
\includegraphics[scale=0.6]{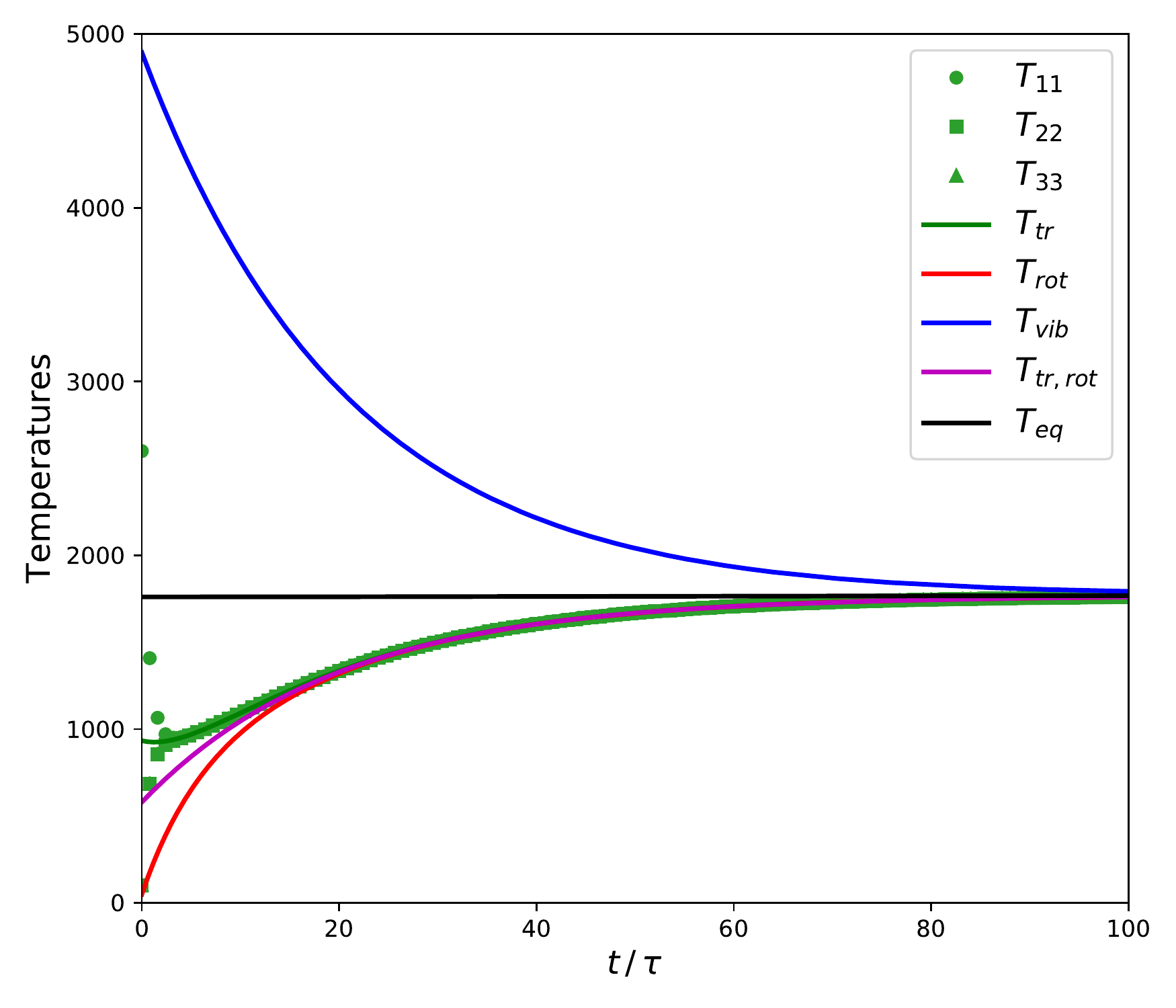}\hfill
\includegraphics[scale=0.6]{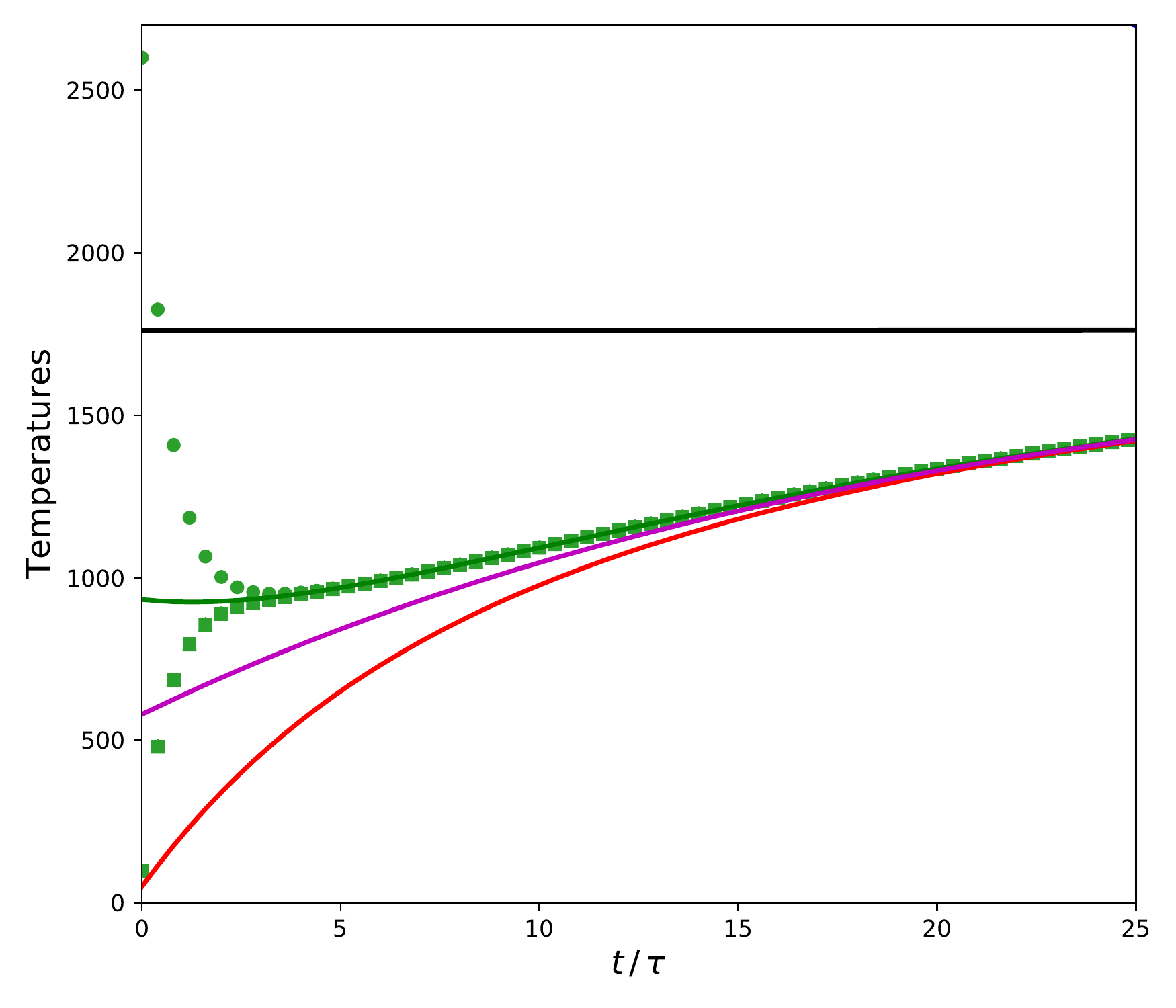}
\caption{\label{fig: temp}Relaxation of temperatures, on the right a
  zoom between $t=0$ and $t=25\tau$. ($\circ$) $T_{11}$, ($\square$)
  $T_{22}$ and ($\triangle$) $T_{33}$ are the components of the stress
  tensor. $T_{tr}$ (green), $T_{rot}$ (red)  and  $T_{vib}$ (blue) are
  respectively the temperatures of translation, rotation and
  vibration, while $T_{tr,rot}$ (purple) and  $T_{eq}$ (black) are
  the translational-rotational temperature and  temperature at
  equilibrium, respectively}
\end{figure}

\begin{figure}[htbp]
\centering 
\includegraphics[scale=0.45]{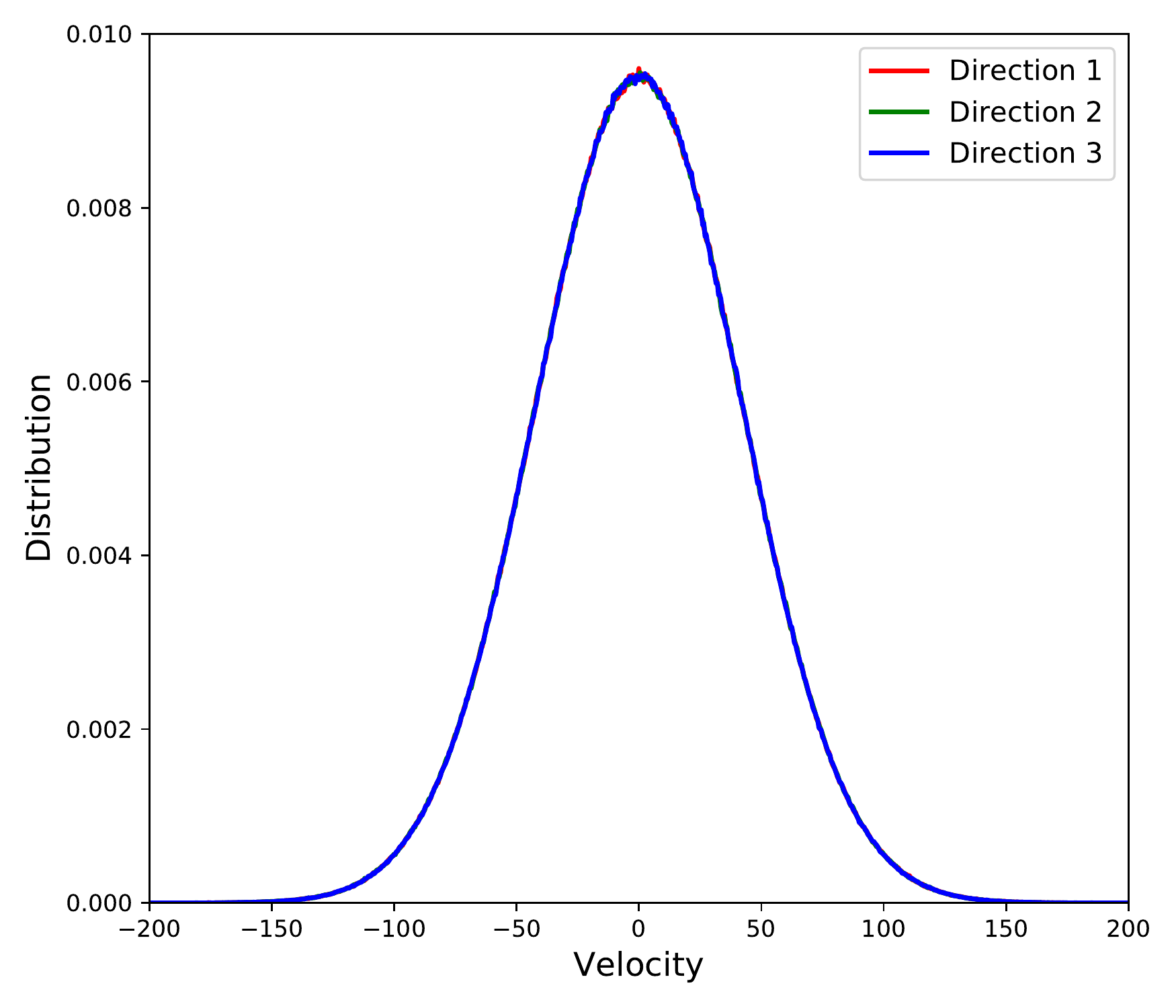}\hfill
\includegraphics[scale=0.45]{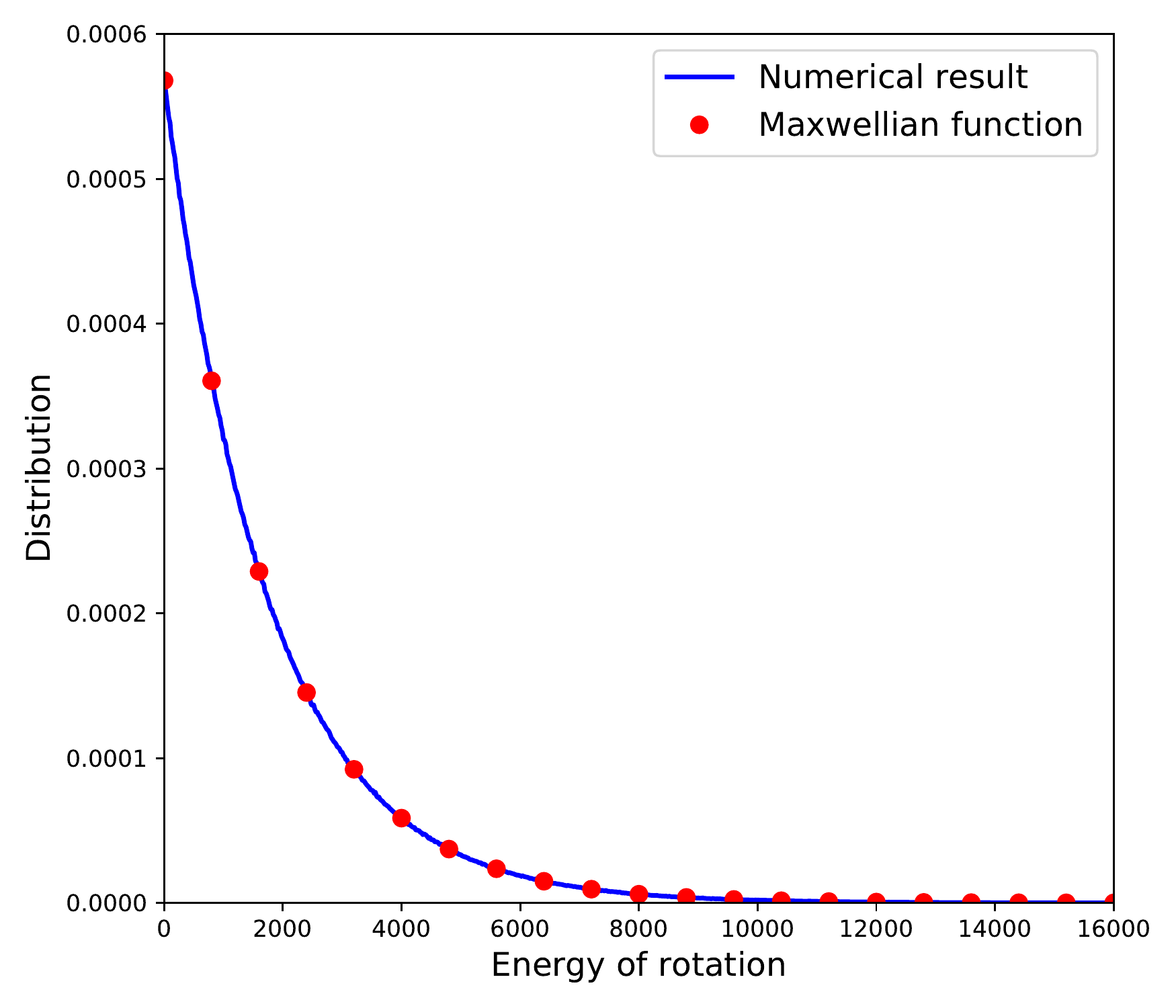}\\
\includegraphics[scale=0.45]{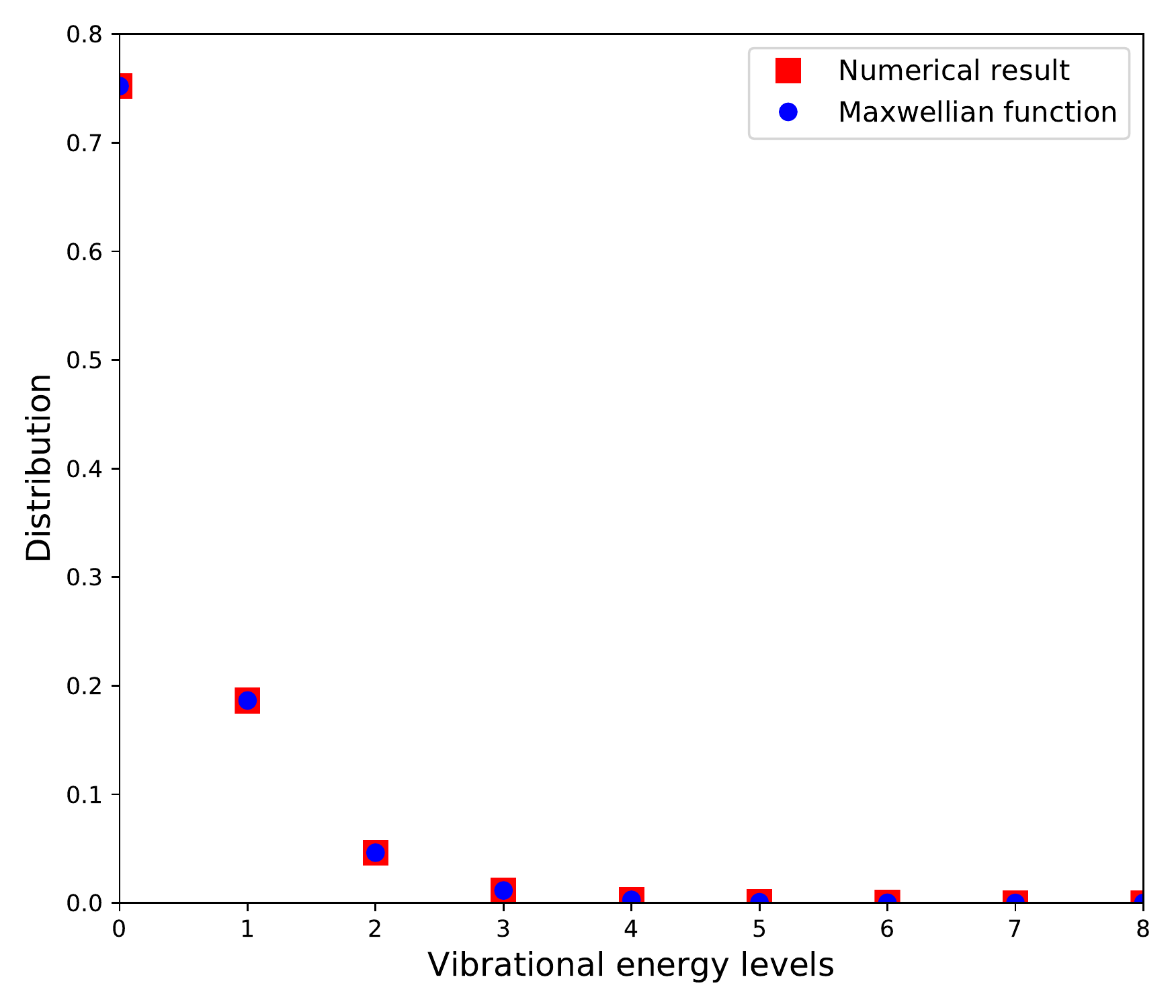}
\caption{\label{fig: equil}Top left: Distribution function of
  velocities at equilibrium:   $x$ direction (blue),   $y$ direction
  (red) and   $z$ direction (green). Top right: Distribution
  of the energy of rotation:  numerical result (blue) and equilibrium theoretical
  distribution (red). Bottom: discrete distribution of the
  vibrational energy:  numerical result (blue) and theoretical
  result (red).}
\end{figure}

\begin{figure}[htbp]
\centering 
\includegraphics[scale=0.58]{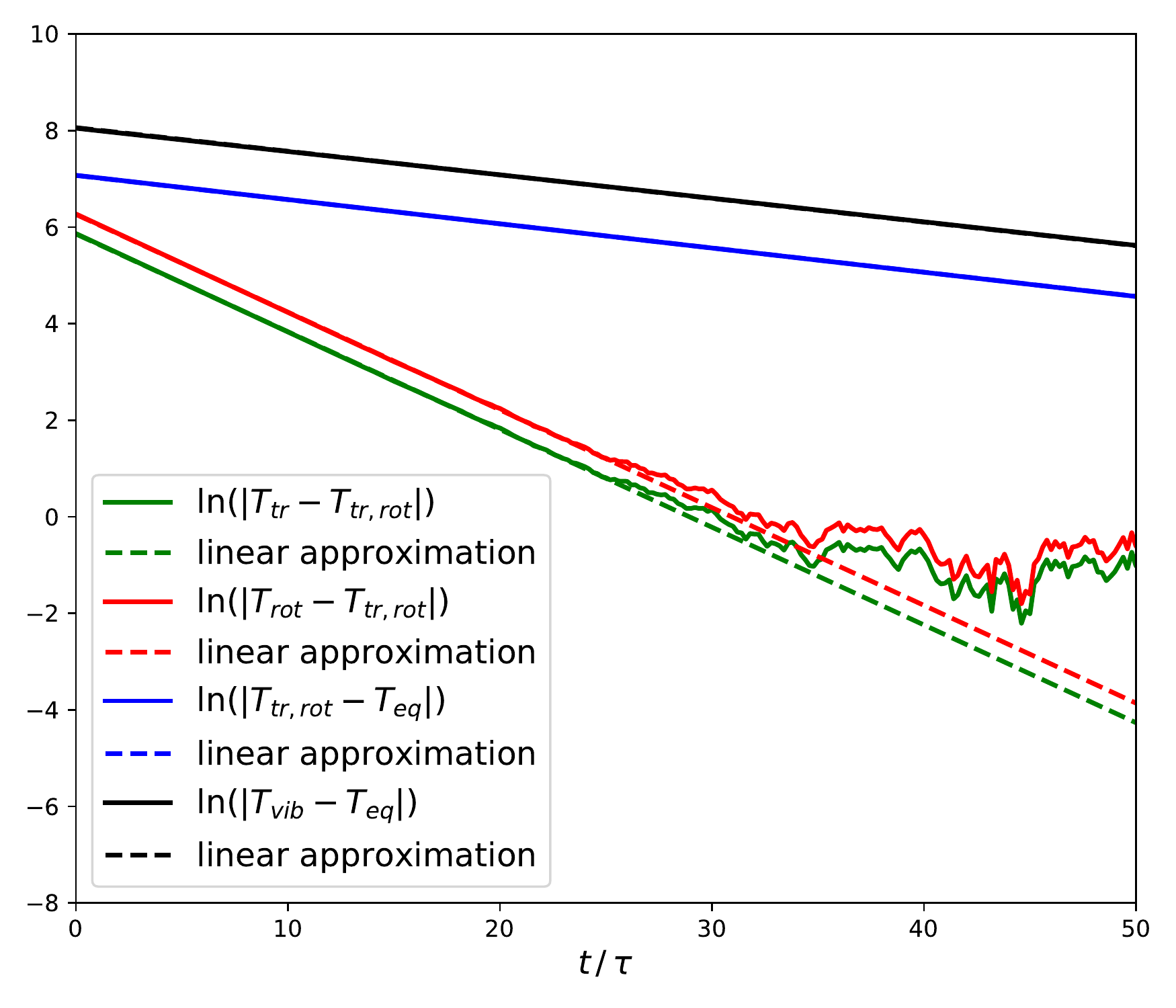}
\caption{\label{fig:diff_temp}Relaxation of  temperatures differences: $(T_{tr}-T_{tr,rot})$ (green),  $(T_{rot}-T_{tr,rot})$
  (red), $(T_{tr,rot}-T_{eq})$  (blue),  $(T_{vib}-T_{eq})$ (black).}
\end{figure}

\begin{figure}[htbp]
\centering 
\includegraphics[scale=0.58]{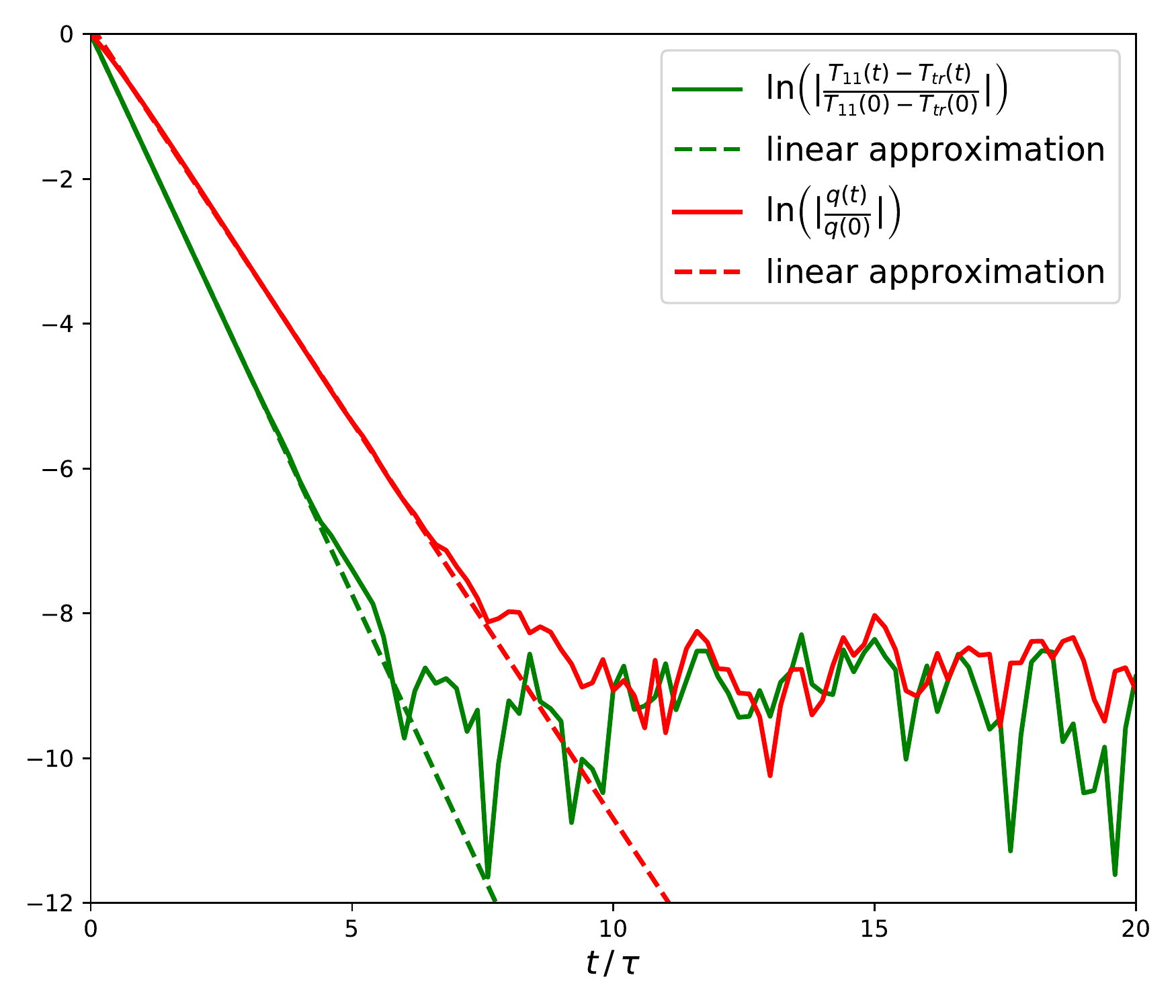}
\caption{\label{fig: Pr}Relaxation of the difference of temperatures $(T_{11}-T_{tr})$ (green) and  first component of the heat flux $q_1$ (red).}
\end{figure}

\end{document}